\documentclass[lettersize,journal]{IEEEtran}
\usepackage{amsmath,amsfonts}
\usepackage{amssymb}
\usepackage{amsthm}
\usepackage{algorithmic}
\usepackage{algorithm}
\usepackage{array}
\usepackage[caption=false,font=normalsize,labelfont=sf,textfont=sf]{subfig}
\usepackage{textcomp}
\usepackage{stfloats}
\usepackage{url}
\usepackage{verbatim}
\usepackage{graphicx}
\usepackage{cite}
\usepackage[colorlinks=true,linkcolor=blue,citecolor=blue,urlcolor=black]{hyperref}

\usepackage{soul}
\usepackage{xcolor}
\usepackage{etoolbox}

\sethlcolor{yellow}

% Redefine the \cite command to color the whole citation including brackets
\makeatletter
\renewcommand{\@cite}[2]{%
    \textcolor{blue}{[}%
    \textcolor{black}{#1}%
    \if@tempswa , \textcolor{black}{#2}\fi
    \textcolor{blue}{]}%
}
\renewcommand{\eqref}[1]{\textcolor{blue}{(\ref{#1})}}
\makeatother

% \makeatletter
% \renewcommand{\eqref}[1]{\textcolor{blue}{(\ref{#1})}}
% \makeatother

\hyphenation{op-tical net-works semi-conduc-tor IEEE-Xplore}
% updated with editorial comments 8/9/2021

\newtheorem{theorem}{Theorem}[section]

\newtheorem{lemma}[theorem]{Lemma}
\newtheorem{proposition}[theorem]{Proposition}

\begin{document}

\title{SINR Maximizing Distributionally Robust Adaptive Beamforming}

\author{Kiarash Hassas Irani, Yongwei Huang,~\IEEEmembership{Senior Member,~IEEE,} and Sergiy A. Vorobyov,~\IEEEmembership{Fellow,~IEEE}
        % <-this % stops a space
% \thanks{This paper was produced by the IEEE Publication Technology Group. They are in Piscataway, NJ.}% <-this % stops a space
% \thanks{Manuscript received April 19, 2021; revised August 16, 2021.}}

\thanks{Kiarash Hassas Irani and Sergiy A. Vorobyov are with the Department of Information and Communications Engineering,
Aalto University, 02150 Espoo, Finland (e-mail: \href{mailto:kiarash.hassasirani@aalto.fi}{kiarash.hassasirani@aalto.fi}; \href{mailto:sergiy.vorobyov@aalto.fi}{sergiy.vorobyov@aalto.fi}).

Yongwei Huang is with the School of Information Engineering, Guangdong University of Technology, University Town, Guangzhou, Guangdong 510006, China (e-mail: \href{mailto:ywhuang@gdut.edu.cn}{ywhuang@gdut.edu.cn}).}

\thanks{\textit{(Corresponding author: Sergiy A. Vorobyov.)}}
}

% The paper headers
% \markboth{Journal of \LaTeX\ Class Files,~Vol.~14, No.~8, August~2021}%
% {Shell \MakeLowercase{\textit{et al.}}: A Sample Article Using IEEEtran.cls for IEEE Journals}

% \IEEEpubid{0000--0000/00\$00.00~\copyright~2021 IEEE}
% Remember, if you use this you must call \IEEEpubidadjcol in the second
% column for its text to clear the IEEEpubid mark.

\maketitle

\begin{abstract}
This paper addresses the robust adaptive beamforming (RAB) problem via the worst-case signal-to-interference-plus-noise ratio (SINR) maximization over distributional uncertainty sets for the random interference-plus-noise covariance (INC) matrix and desired signal steering vector. Our study explores two distinct uncertainty sets for the INC matrix and three for the steering vector. The uncertainty sets of the INC matrix account for the support and the positive semidefinite (PSD) mean of the distribution, as well as a similarity constraint on the mean. The uncertainty sets for the steering vector consist of the constraints on the first- and second-order moments of its associated probability distribution. The RAB problem is formulated as the minimization of the worst-case expected value of the SINR denominator over any distribution within the uncertainty set of the INC matrix, subject to the condition that the expected value of the numerator is greater than or equal to one for every distribution within the uncertainty set of the steering vector. By leveraging the strong duality of linear conic programming, this RAB problem is reformulated as a quadratic matrix inequality problem. Subsequently, it is addressed by iteratively solving a sequence of linear matrix inequality relaxation problems, incorporating a penalty term for the rank-one PSD matrix constraint. We further analyze the convergence of the iterative algorithm. The proposed robust beamforming approach is validated through simulation examples, which illustrate improved performance in terms of the array output SINR.
\end{abstract}

\begin{IEEEkeywords}
Robust adaptive beamforming (RAB), distributionally robust optimization, strong duality, quadratic matrix inequality, rank-one solutions, interference-plus-noise covariance (INC) matrix.
\end{IEEEkeywords}

\section{Introduction}
\IEEEPARstart{B}{eamforming} is an essential technique in array signal processing, widely used in applications such as wireless communications, radar systems, sonar, biomedical applications, acoustics, as well as in various other domains \cite{ShahbazPanahi2018}. It involves the manipulation of signals received by an array of sensors to enhance the reception or transmission in a specific direction, i.e., spatial filtering of signals \cite{Veen1988}. When beamforming weights are optimized using array data or measurements, the process is referred to as adaptive beamforming. This term is used to differentiate it from conventional beamforming, where the weights remain independent of the array data \cite{Vorobyov2014}. However, it is important to note that adaptive beamforming algorithms are prone to performance degradation due to various factors such as mismatch between assumed and actual array response. Robust adaptive beamforming (RAB) techniques have been developed to mitigate these issues, ensuring reliable performance even under adverse conditions \cite{Li2006}.

Various approaches can be employed for RAB designs (see, e.g., \cite{Gershman2010}, \cite{Vorobyov2013} and references therein). In case of minimum variance distortionless response (MVDR) beamformers, studies have focused on two principal approaches: worst-case signal-to-interference-plus-noise ratio (SINR) maximization-based RAB and probability chance constraint RAB. Assuming deterministic steering vector mismatch, techniques based on the worst-case performance aim to enhance the output SINR under the most adverse scenarios, regardless of their likelihood. In contrast, probability chance constraints-based approaches consider statistical model for the signal steering vector (or the steering vector mismatch), \cite{Besson2004}, \cite{Besson2005}, allowing the distortion-less constraint to be maintained with a specified probability. The relation of two approaches is addressed in \cite{Vorobyov2008} in the cases of circularly symmetric Gaussian and worst-case distributions of the steering vector mismatch, and it is shown that probability chance constraint RAB, approximated by convex second-order cone programming (SOCP) problems, can be interpreted as their deterministic worst-case equivalents.

Optimizing the worst-case performance, the authors of \cite{Vorobyov2003} derived a convex formulation for RAB problem based on SOCP, when the mismatch between the actual desired signal steering vector and the presumed steering vector is modeled as a ball centered at the presumed steering vector. The study in \cite{Kim2008} also demonstrates that the worst-case SINR maximization problem can be reformulated as a convex optimization problem, provided the uncertainty sets for both the interference-plus-noise covariance (INC) matrix and the signal steering vector are convex. Specifically, it demonstrates that if the uncertainty model is represented by linear matrix inequalities (LMIs), the problem can be effectively addressed through semidefinite programming (SDP). In case of non-convex uncertainty set for the desired signal steering vector (e.g., the intersection of a ball centered at a nonzero point and a sphere centered at the origin), \cite{Huang2023} shows that SINR maximization problem can be reformulated as a quadratic matrix inequality (QMI) problem. It further introduces a convex method of solving the QMI problem via LMI relaxation and establishes necessary and sufficient conditions for the LMI relaxations to produce rank-one solutions.

Assuming that the probability distribution of the signal steering vector is known, the RAB problem can be formulated as a stochastic programming problem (see, e.g., \cite{Shapiro2014}, \cite{Birge2011}), where expected values are incorporated into the optimization framework. However, the exact knowledge of the underlying distribution might not always be available, and in such cases, distributionally robust optimization (DRO) (see, e.g., \cite{Rahimian2022}, \cite[Chapter 8]{Delage2021}) is employed, which considers uncertainty sets that capture the uncertainty about the true distribution and aims to maximize the worst-case SINR over a range of possible distributions. This technique helps to ensure reliable performance even when the exact statistical properties of the steering vector are unknown. 

The study in \cite{Zhang2015} considers an ellipsoidal constraint on the first-order moment for the signal steering vector, as well as a constraint on the support of the distribution. According to \cite{Zhang2015}, the RAB problem, which involves maximizing the worst-case SINR over the distribution set, can be converted into a convex, tractable form using the S-procedure (also known as the S-lemma) \cite{Fradkov1979}, provided that the support set is composed of a finite number of subsets, each defined by a quadratic constraint. In \cite{Zhang2016}, the chance-constrained RAB is analyzed across various distributional sets, and their performances are compared using numerical simulations. Finally, using the linear conic reformulation of a distributionally robust chance constraint as presented in \cite{Zymler2013}, the distributionally RAB problem is converted into an SDP problem in \cite{Li2018}.

While the aforementioned approaches address robustness by incorporating uncertainty directly into the beamformer design, another line of research focuses on explicitly mitigating model mismatches through the reconstruction of the INC matrix and improved estimation of the steering vector.
For example, the work of \cite{Gu2012} proposes a method to reconstruct the INC matrix via spatial integration of the Capon spectrum over interference-dominated angular sectors. Based on this reconstruction, the presumed steering vector is adaptively corrected to maximize the beamformer output power, under the constraint that the corrected vector must not converge toward any interference source. In \cite{Ruan2014}, a robust beamforming method that estimates the steering vector by projecting a shrinkage-estimated cross-correlation vector between the beamformer output and the array observation onto a predefined signal subspace is proposed. The INC matrix is obtained by subtracting the estimated signal covariance from the data covariance, both computed using the oracle approximating shrinkage (OAS) technique, based on coarse prior knowledge of the signal direction. In addition, computational efficiency of the beamforming has been also addressed. In \cite{Ruan2016}, the steering vector is iteratively refined by projecting the cross-correlation vector onto an orthogonal Krylov subspace. A computationally efficient approach based on power method processing and spatial spectrum matching is presented in \cite{Mohammadzadeh2022}, enabling more accurate reconstruction of the INC matrix and estimation of the desired signal steering vector. 

Several other contributions, including \cite{Huang2015, ZZhang2016, Liao2017, Chakrabarty2018, Yang2018}, further reflect the ongoing interest in steering vector refinement and INC matrix reconstruction, particularly through data-driven and structure-aware approaches aimed at improving robustness in practical RAB scenarios. In addition to enhancing robustness, many of these methods are deliberately designed to improve computational efficiency, prioritizing algorithmic simplicity and lower runtime complexity to better suit real-time or resource-constrained applications.

In this paper,\footnote{Some preliminary outcomes from this research work (covered in Subsection IV.A here) have been initially outlined in \cite{Huang2022}, albeit without detailed proofs.} we investigate the RAB problem through the framework of distributionally robust optimization (DRO) (see, e.g., \cite{Delage2021}). Although some prior works seek to reduce computational complexity, our focus is on formulating a robust approach that provides performance guarantees under distributional uncertainty, without specifically tailoring the algorithm for computational efficiency.

The objective is to minimize the worst-case expected value of the interference-plus-noise power (the denominator of the SINR), considering all possible distributions for the random INC matrix within a specified uncertainty set. Concurrently, we enforce a constraint ensuring that the expected value of the desired signal power (the numerator of the SINR) is no less than one for all distributions of the random steering vector within another uncertainty set. Distinguishing our approach from existing methodologies that typically handle only the uncertainty in the signal steering vector while approximating the INC matrix with the sample data covariance matrix, our DRO model comprehensively addresses distributional uncertainties in both the INC matrix and the steering vector. 

By leveraging the strong duality principle in linear conic programming, we transform the DRO-based RAB problem into a QMI problem with respect to the beamforming vector. This reformulation is achieved under the assumption that the uncertainty set in the objective of the RAB problem comprises a probability measure constraint, a positive semidefinite (PSD) constraint, and a similarity constraint on the first-order moment. Simultaneously, the uncertainty set in the constraint of the RAB problem is defined by first- and second-order moment constraints. The resultant QMI problem is tackled through an LMI relaxation, and we adopt an iterative algorithm to obtain a rank-one solution for the LMI relaxation problem by exploiting the fact that the trace of a nonzero PSD matrix equals its Frobenius norm if and only if it is of rank one. During each iteration, we solve an LMI problem with a penalty term that enforces the rank-one constraint in the objective function. We establish that the stopping criterion of the proposed algorithm can be satisfied, ensuring convergence to a rank-one solution. To demonstrate the efficacy of the proposed method, we present illustrative numerical examples that highlight its superior performance in terms of array output SINR compared to two competitive beamforming techniques. Additionally, the performance of the proposed beamformer is evaluated under different uncertainty sets.
The contributions of this paper are summarized as follows:
\begin{itemize}
\item \textbf{DRO-based RAB Formulation:} The RAB problem is formulated using distributionally robust optimization that accounts for uncertainties in both the INC matrix and the steering vector.
\item \textbf{QMI Problem Reformulation:} The DRO-based RAB problem is reformulated into a QMI problem using strong duality in linear conic programming.
\item \textbf{LMI Relaxation and Rank-one Solution:} An iterative procedure is developed to find a rank-one solution for the LMI relaxation of the QMI problem.
\item \textbf{Evaluation under Various uncertainty Sets:} The performance of the proposed beamformer is analyzed under different uncertainty sets for both the INC matrix and the steering vector.
\item \textbf{Performance Validation:} The performance of the proposed beamformer is evaluated through numerical simulations, demonstrating improved SINR performance over existing competitive beamforming techniques.
\end{itemize}

\subsection{Paper Organization}

The remainder of this paper is organized as follows. In Section~\ref{Signal Model and Problem Formulation}, the signal model is defined, and the problem is formulated, establishing the foundation for the subsequent investigation. Section~\ref{Reformulation of the DRO-based RAB Problem} presents an approach to reformulating the DRO-based RAB problem, highlighting its theoretical framework. A procedure for obtaining a rank-one solution to the LMI relaxation problem is detailed in Section~\ref{A Rank-One Solution Procedure for the LMI Relaxation Problem for the DRO-based RAB}. Section~\ref{Analysis of DRO-based RAB Problem under Alternative uncertainty Sets} explores the DRO-based RAB problem under alternative uncertainty sets. In Section~\ref{Numerical Examples}, the approach is validated through numerical examples, demonstrating its superior performance over existing competitive beamforming techniques. Finally, Section~\ref{Conclusion} concludes the paper by summarizing the findings and contributions.

\subsection{Notation}

Throughout this paper, vectors and matrices are denoted using lowercase and uppercase bold symbols, respectively. The transpose operator is represented by $(\cdot)^T$, and the conjugate transpose operator is denoted by $(\cdot)^H$. The symbol $\boldsymbol{I}$ denotes the identity matrix. Symbols $\boldsymbol{0}$ and $\boldsymbol{1}$ represent matrices or vectors filled with zeros and ones, respectively, with their sizes inferred from the context. The Euclidean norm of a vector is denoted by $\|\cdot\|$ and the Frobenius norm of a matrix is denoted by $\|\cdot\|_F$. The function $\text{tr}(\cdot)$ denotes the trace of a square matrix. The functions $\operatorname{rank}(\cdot)$ and $\lambda_{\max}(\cdot)$ denote the rank and the largest eigenvalue of a matrix, respectively. $\Re (\cdot)$ and $\Im (\cdot)$ represent the real and imaginary parts of a complex-valued argument. The symbol $\succeq$ denotes the generalized inequality: $\boldsymbol{A} \succeq \boldsymbol{B}$ implies that $\boldsymbol{A}-\boldsymbol{B}$ is a Hermitian PSD matrix. The set of all $N \times N$ Hermitian matrices is denoted by $\mathcal{H}^N$, and $\mathcal{H}^N_+$ denotes the set of all Hermitian PSD matrices. In addition, $\mathbb{E}_{G}[\cdot]$ stands for the statistical expectation under the distribution $G$ and $\mathbb{P}_{G}(\cdot)$ represents the probability of an event under the distribution $G$. %The notation $\mathcal{A} \cong \mathcal{B}$ is used to denote that there is an isomorphism between sets $\mathcal{A}$ and $\mathcal{B}$. 
The set of real numbers is denoted by $\mathbb{R}$ and the set of complex numbers by $\mathbb{C}$.

\section{Signal Model and Problem Formulation} \label{Signal Model and Problem Formulation}

Consider the scenario in which an array of $N$ antenna elements receives a point signal source $s(t)$ through the complex-valued steering vector $\boldsymbol{a}$. The received signal of interest (SOI) is therefore formulated as:
\begin{equation}
    \boldsymbol{s}(t) = s(t) \boldsymbol{a}.
\end{equation}
In addition to the desired signal, the observation vector encompasses unwanted components, i.e., noise and interference. Mathematically represented, the equation for the observation vector is:
\begin{equation}
    \boldsymbol{y}(t) = \boldsymbol{s}(t) + \boldsymbol{i}(t) + \boldsymbol{n}(t),
\end{equation}
where $\boldsymbol{s}(t)$, $\boldsymbol{i}(t)$, and $\boldsymbol{n}(t)$ are the statistically independent components of SOI, interference, and noise, respectively. The array output signal is a weighted sum of the signals received at different antenna elements, which is given by:
\begin{equation}
    x(t) = \boldsymbol{w}^H \boldsymbol{y}(t),
\end{equation}
where $\boldsymbol{w}$ is an $N$-dimensional complex-valued vector (i.e., $\boldsymbol{w} \in \mathbb{C} ^N$), representing the beamforming weight vector. Combining the expressions for SOI and the observation vector, the array output signal can be represented as:
\begin{equation}
    x(t) = s(t) \boldsymbol{w}^H \boldsymbol{a} + \boldsymbol{w}^H (\boldsymbol{i}(t) + \boldsymbol{n}(t)).
\end{equation}
Based on this model, we can derive the SINR for the output signal as follows:
\begin{equation}
    \text{SINR} = \frac{\sigma_{\rm s}^2 |\boldsymbol{w}^H \boldsymbol{a}|^2}{\boldsymbol{w}^H \boldsymbol{R}_{\rm{i+n}} \boldsymbol{w}},
\end{equation}
where $\sigma_{\rm s} ^2$ is the power/variance of the source signal and $\boldsymbol{R}_{\rm{i+n}} \triangleq \mathbb{E}[(\boldsymbol{i}(t) + \boldsymbol{n}(t))(\boldsymbol{i}(t) + \boldsymbol{n}(t))^H]$ is the INC matrix. To determine the optimal beamforming weights aligning with the maximum value of the SINR, the optimization problem can be formulated as:
\begin{equation}
    \underset{\boldsymbol{w}}{\text{minimize}} \;\; \boldsymbol{w}^H \boldsymbol{R}_{\rm{i+n}} \boldsymbol{w} \quad \text{subject to} \;\; |\boldsymbol{w}^H \boldsymbol{a}|^2 \geq 1.
\end{equation}
It is worth noting that the constraint of the optimization problem guarantees a distortionless response in the worst-case scenario \cite{Vorobyov2003}.

In real-world scenarios, accessing the INC matrix $\boldsymbol{R}_{\rm{i+n}}$ is typically challenging, and the precise knowledge of the steering vector $\boldsymbol{a}$ is typically unattainable. By considering both $\boldsymbol{R}_{\rm{i+n}}$ and $\boldsymbol{a}$ as random variables (cf. \cite[Chapter 8]{Delage2021}), the DRO-based RAB problem is formulated as:
\begin{equation} \label{eq:DRO-based RAB}
\begin{split}
    \underset{\boldsymbol{w}}{\text{minimize}} \quad &\max _{G_1 \in \mathcal{D}_1} \mathbb{E}_{G_1} [\boldsymbol{w}^H \boldsymbol{R}_{\rm{i+n}} \boldsymbol{w}] \\
    \text{subject to} \quad &\min _{G_2 \in \mathcal{D}_2} \mathbb{E}_{G_2} [\boldsymbol{w}^H \boldsymbol{a} \boldsymbol{a}^H \boldsymbol{w}] \geq 1,
\end{split}
\end{equation}
where $\mathcal{D}_1$ and $\mathcal{D}_2$ represent the sets of the probability distributions of random matrix $\boldsymbol{R}_{\rm{i+n}} \in \mathcal{H}^N$ and random steering vector $\boldsymbol{a}\in \mathbb{C}^N$, respectively.\footnote{Note that, due to the isomorphism between $\mathcal{H}^N$ and $\mathbb{R}^{N^2}$ (see, e.g., \cite{Andersen1995}), the INC matrix $\boldsymbol{R}_{\rm{i+n}}$ can be interpreted as an $N^2$-dimensional real-valued vector. Similarly, $\boldsymbol{a}$ can be viewed as a $2N$-dimensional real-valued vector, formed by concatenating its real and imaginary parts.} Explicitly, the distribution set $\mathcal{D}_1$ is characterized as:
\begin{equation} \label{eq:uncertainty_R}
    \mathcal{D}_1 = \left\{G_1 \in \mathcal{M}_1 \; \; \middle| \; \;
    \begin{array}{l}
        \mathbb{P}_{G_1}(\boldsymbol{R}_{\rm{i+n}} \in \mathcal{Z}_1) = 1\\
        \mathbb{E}_{G_1}[\boldsymbol{R}_{\rm{i+n}}] \succeq \boldsymbol{0}\\
        \|\mathbb{E}_{G_1}[\boldsymbol{R}_{\rm{i+n}}]-\boldsymbol{S}_0\|_F \leq \rho _1\\
    \end{array}
    \right\},
\end{equation}
where $\mathcal{M}_1$ denotes the set of all probability measures on the measurable space $(\mathcal{H}^{N}, \mathcal{B}_1)$, $\mathcal{B}_1$ is the Borel $\sigma$-algebra on $\mathcal{H}^{N}$, $\mathcal{Z}_1 \subseteq \mathcal{H}^{N}$ is a Borel set, and $\boldsymbol{S}_0$ denotes the empirical mean of $\boldsymbol{R}_{\rm{i+n}}$. Specifically, $\boldsymbol{S}_0$ can be approximated using the sample data covariance matrix given by $\widehat{\boldsymbol{R}} = \frac{1}{T} \sum _{t=1} ^T \boldsymbol{y}(t) \boldsymbol{y}^H(t)$,
% \begin{equation} \label{R_hat}
% \widehat{\boldsymbol{R}} = \frac{1}{T} \sum _{t=1} ^T \boldsymbol{y}(t) \boldsymbol{y}^H(t),
% \end{equation}
where $T$ represents the number of available training snapshots. Alternatively, a more accurate approximation of $\boldsymbol{S}_0$ can be obtained using a reconstructed INC matrix, such as the one proposed in \cite{Gu2012}. 

The distribution set $\mathcal{D}_1$ is designed to capture uncertainty while maintaining critical structural properties of the INC matrix $\boldsymbol{R}_{\rm{i+n}}$. Each constraint plays a specific role in ensuring that the uncertainty is well-posed and practically meaningful. By carefully defining these constraints, we ensure that the model remains both robust and reflective of real-world system behavior, accounting for prior knowledge, physical limitations, and potential estimation errors. The key constraints are outlined as follows:

\begin{itemize}
    \item \textbf{Support Constraint:} The constraint $\mathbb{P}_{G_1}(\boldsymbol{R}_{\rm{i+n}}\!\in\!\mathcal{Z}_1)\!=\!1$ ensures that the random matrix $\boldsymbol{R}_{\rm{i+n}}$ always lies within a predefined support set $\mathcal{Z}_1$. This set reflects prior knowledge about the system; for example, physical limitations on power levels of interference and noise. In other words, it defines the range of values that are considered physically reasonable, helping to exclude unrealistic or invalid scenarios.
    
    \item \textbf{PSD Constraint:} The condition $\mathbb{E}_{G_1}[\boldsymbol{R}_{\rm{i+n}}] \succeq \boldsymbol{0}$ guarantees that the expected INC matrix is PSD, since the covariance matrix of a random vector must be PSD under any distribution in order to represent physically meaningful values. Without this constraint, the optimization could result in invalid covariance matrices that do not correspond to realistic scenarios.
    
    \item \textbf{Similarity Constraint:} The constraint $\|\mathbb{E}_{G_1}[\boldsymbol{R}_{\rm{i+n}}] - \boldsymbol{S}_0\|_F \leq \rho_1$ imposes an upper bound on the Frobenius norm of the difference between the expected INC matrix $\mathbb{E}_{G_1}[\boldsymbol{R}_{\rm{i+n}}]$ and the empirical covariance estimate $\boldsymbol{S}_0$.\footnote{Additionally, the Frobenius norm in the constraint could be replaced with other common matrix norms, e.g., the spectral norm.} This similarity constraint ensures that the expected INC matrix $\mathbb{E}_{G_1}[\boldsymbol{R}_{i+n}]$ remains close to a given empirical estimate $\boldsymbol{S}_0$, with the deviation limited by a threshold $\rho_1$. This helps the model stay consistent with observed data while allowing for some uncertainty due to estimation errors or changing conditions. The parameter $\rho_1$ serves as a trust level: a smaller $\rho_1$ indicates strong reliance on the data, whereas a larger value permits greater flexibility to accommodate modeling imperfections.
\end{itemize}

Assuming that the mean vector $\boldsymbol{a}_0 \in \mathbb{C}^N$ and the covariance matrix $\boldsymbol{\Sigma} \succ 0$ of the random vector $\boldsymbol{a}$ under the true distribution $\overline{G}_2$ are known (see, e.g., \cite{Zymler2013}), the uncertainty set $\mathcal{D}_2$, which is a moment-based model, can be characterized as:
% \begin{equation} \label{eq:uncertainty_a}
%     \mathcal{D}_2 = \left\{G_2 \in \mathcal{M}_2 \; \; \middle| \; \;
%     \begin{array}{l}
%         \mathbb{P}_{G_2}(\boldsymbol{a} \in \mathcal{Z}_2) = 1\\
%         \mathbb{E}_{G_2}[\boldsymbol{a}] = \boldsymbol{a}_0\\
%         \mathbb{E}_{G_2}[\boldsymbol{a} \boldsymbol{a}^H] \preceq (1+\gamma_1)\boldsymbol{\Sigma} + \boldsymbol{a}_0 \boldsymbol{a}_0 ^H\\
%         \mathbb{E}_{G_2}[\boldsymbol{a} \boldsymbol{a}^H] \succeq (1-\gamma_1)\boldsymbol{\Sigma} + \boldsymbol{a}_0 \boldsymbol{a}_0 ^H\\
%     \end{array}
%     \right\},
% \end{equation}
\begin{equation} \label{eq:uncertainty_a}
    \mathcal{D}_2 = \left\{G_2 \in \mathcal{M}_2 \; \; \middle| \; \;
    \begin{array}{l}
        \mathbb{P}_{G_2}(\boldsymbol{a} \in \mathcal{Z}_2) = 1\\
        \| \mathbb{E}_{G_2}[\boldsymbol{a}] - \boldsymbol{a}_0 \| \leq \gamma_1\\
        \mathbb{E}_{G_2}[\boldsymbol{a} \boldsymbol{a}^H] \preceq (1+\gamma_2)\boldsymbol{\Sigma} + \boldsymbol{a}_0 \boldsymbol{a}_0 ^H\\
    \end{array}
    \right\},
\end{equation}
where $\mathcal{M}_2$ denotes the set of all probability measures on the measurable space $(\mathbb{C}^{N}, \mathcal{B}_2)$, $\mathcal{B}_2$ is the Borel $\sigma$-algebra on $\mathbb{C}^{N}$, $\mathcal{Z}_2 \subseteq \mathbb{C}^{N}$ is a Borel set. In essence, $\mathcal{D}_2$ encompasses all probability distributions on $\mathcal{Z}_2$ with bounded first- and second-order moments (note that the true second-order moment under $\overline{G}_2$ is $\mathbb{E}_{\overline{G}_2}[\boldsymbol{a} \boldsymbol{a}^H] = \boldsymbol{\Sigma} + \boldsymbol{a}_0 \boldsymbol{a}_0 ^H$).

The distribution set $\mathcal{D}_2$ is formulated to address the inherent uncertainty in the steering vector $\boldsymbol{a}$, ensuring the model remains robust to deviations while preserving key statistical characteristics. %The constraints ensure that uncertainty is captured in a controlled manner, particularly in relation to known signal properties. 
The key components of this distribution set are as follows:

\begin{itemize}
    \item \textbf{Support Constraint:} Similar to the uncertainty set $\mathcal{D}_1$, the constraint $\mathbb{P}_{G_2}(\boldsymbol{a} \in \mathcal{Z}_2) = 1$ ensures that the random steering vector $\boldsymbol{a}$ always lies within a predefined support set $\mathcal{Z}_2$, reflecting prior knowledge about its feasible values. For instance, $\mathcal{Z}_2$ may encode constraints related to the known angular region from which the signal of interest is expected to arrive, or restrictions on the magnitude and phase of the steering vector elements. This helps the model focus only on physically meaningful signal directions, avoiding unrealistic possibilities.

    \item \textbf{First-Order Moment Constraint:} The constraint $\| \mathbb{E}_{G_2}[\boldsymbol{a}] - \boldsymbol{a}_0 \| \leq \gamma_1$ ensures that the expected value of the random steering vector under any distribution in $\mathcal{D}_2$, lies within a bounded distance $\gamma_1$ from the nominal steering vector $\boldsymbol{a}_0$, the mean of random $\boldsymbol{a}$ under the true distribution $\overline{G}_2$. This mean is assumed to be known \cite{Zymler2013}. By limiting how far the expected value of $\boldsymbol{a}$ can drift from $\boldsymbol{a}_0$, we allow for some uncertainty while still keeping the model centered around realistic and expected signal directions. However, this alone is insufficient to fully capture the distributional uncertainty for the random steering vector, as it lacks higher-order moments’ information, given that the first-order moment of $\boldsymbol{a}$ remains confined to the specified region.
    
    \item \textbf{Second-Order Moment Constraint:} The constraint $\mathbb{E}_{G_2}[\boldsymbol{a} \boldsymbol{a}^H] \preceq (1+\gamma_2)\boldsymbol{\Sigma} + \boldsymbol{a}_0 \boldsymbol{a}_0 ^H$ imposes an upper bound on the second-order moment of $\boldsymbol{a}$, with the level of uncertainty controlled by the parameter $\gamma_2$. This constraint characterizes how the variations of the cross terms of two elements of $\boldsymbol{a}$ are correlated under distribution $G_2 \in \mathcal{D}_2$. In practice, the true steering vector may vary due to noise, interference, or array mismatches. By constraining these variations, the model remains robust to such imperfections, thereby better capturing the uncertainties inherent in real-world scenarios.
\end{itemize}

Higher-order moments, such as skewness (third-order) or kurtosis (fourth-order), are not necessary in this formulation because the uncertainty in the steering vector can be adequately captured by the first- and second-order moments. In beamforming and array signal processing applications, the key sources of distributional uncertainties, i.e., random DOA and steering vector random perturbations, yield typically well-modeled uncertainty sets constrained by the first- and second- order moments \cite{Delage2021}, \cite{Zymler2013}, \cite{Li2018}. Although adding the higher-order moment constraints, which describe more detailed aspects of the distribution shape (such as asymmetry or tail behavior), generally can provide additional information about the distributional uncertainties, this would increase the complexity of the model, without offering substantial practical benefits in the context of robust adaptive beamforming. In other words, the distributional uncertainty sets defined by the first- and second-order moments align with common robust adaptive beamforming frameworks.
% the key sources of uncertainty, i.e., DOA mismatches and steering vector perturbations, are typically well-modeled by mean and covariance. The higher-order moments, which describe more detailed aspects of the distribution shape (such as asymmetry or tail behavior), generally do not provide additional significant insights in this context. Additionally, this approach does not impose any specific assumption on the shape of the distribution for the steering vector, allowing for a flexible and distribution-free model. By focusing solely on the mean and covariance, we ensure that the uncertainty is captured without over-constraining the model based on higher-order moment assumptions. Furthermore, adding higher-order moment constraints would increase the complexity of the model without offering substantial practical benefits. The assumption that the uncertainty can be captured by the first two moments aligns with common robust adaptive beamforming frameworks, which focus on handling variability in direction and magnitude (captured by mean and covariance). 
Since typical signal and interference conditions do not exhibit significant higher-order effects, the inclusion of higher-order moments would not meaningfully improve the robustness or accuracy of the model.

There are notable differences between the DRO-based RAB problem \eqref{eq:DRO-based RAB} and the distributionally robust beamforming problem discussed in \cite{Li2018}. These differences include:
\begin{itemize}
    \item In addition to the distributional uncertainty of the signal steering vector, problem \eqref{eq:DRO-based RAB} also accounts for the distributional uncertainty associated with the random INC matrix.
    \item The stochastic constraint in \eqref{eq:DRO-based RAB} guarantees that the desired signal power at the array output is at least equal to the source signal power (distortionless under the worst-case scenario) for all distributions within the uncertainty set $\mathcal{D}_2$. In contrast, \cite{Li2018} considers the distributionally robust chance constraint $\mathbb{P}_{P}(\{ \Re(\boldsymbol{w}^H \boldsymbol{a}) \geq 1 \}) \geq p,\, \forall P \in \mathcal{P}$.
    \item In \cite[Eq. (24)]{Li2018}, the distributional uncertainty set is defined based on the moments of the random mismatch vector (i.e., the difference between true and assumed steering vectors), under the assumption that it is a zero-mean vector with uncorrelated components having identical variances. Conversely, the uncertainty set in \eqref{eq:uncertainty_a} is characterized based on the moments of the random signal steering vector without any specific assumptions regarding its mean vector and covariance matrix. This approach provides a broader and more versatile modeling framework.
\end{itemize}

\section{Dual Reformulation of the DRO-based RAB Problem} \label{Reformulation of the DRO-based RAB Problem}

In this section, we derive a dual reformulation of the DRO-based RAB problem \eqref{eq:DRO-based RAB} by replacing the inner maximization and minimization problems, appearing in the objective function and the constraint, respectively, with their corresponding dual formulations. This transformation recasts the original problem into a more tractable form, thereby facilitating further analytical development and enabling efficient numerical implementation.

% In this section, a {\color{blue}dual} reformulation of the DRO-based RAB problem \eqref{eq:DRO-based RAB} is derived by substituting {\color{blue}dual} problems for the two maximization and minimization problems in the objective function and constraint of \eqref{eq:DRO-based RAB}, respectively.
% dual problems for the inner DRO problems. By incorporating the conditions defining the uncertainty set as constraints within the problem formulation, each DRO problem can be addressed across the set of all probability measures on the corresponding measurable space. With this approach, the inner maximization problem in the objective is rewritten as follows 

Let us first deal with the maximization problem in the objective of \eqref{eq:DRO-based RAB}, with the uncertainty set $\mathcal{D}_1$ of distributions as its feasible set, as follows (the subscript of $\boldsymbol{R}_{\rm{i+n}}$ is dropped for notational simplicity):
\begin{equation} \label{eq:objectiveDRO}
\begin{split}
    \underset{G_1 \in \mathcal{M}_1}{\text{maximize}} \quad &\int _{\mathcal{Z}_1} \boldsymbol{w}^H \boldsymbol{R} \boldsymbol{w} \, dG_1(\boldsymbol{R})\\
    \text{subject to} \quad &\int _{\mathcal{Z}_1} dG_1(\boldsymbol{R}) = 1,\\
    &\int _{\mathcal{Z}_1} \boldsymbol{R} \, dG_1(\boldsymbol{R}) \succeq \boldsymbol{0},\\
    &\left \|\int _{\mathcal{Z}_1} \boldsymbol{R} \, dG_1(\boldsymbol{R})-\boldsymbol{S}_0 \right \| _F \leq \rho _1.\\
\end{split}
\end{equation}
% The maximization problem \eqref{eq:objectiveDRO} can be recast as a minimization problem by considering the objective function with its sign reversed.
The Lagrangian function for problem \eqref{eq:objectiveDRO} is thus expressed as:
\begin{equation} \label{eq:LagrangianObjectiveDRO}
\begin{split}
    \mathcal{L}_1&(G_1, \lambda, \boldsymbol{Y}, \boldsymbol{X}, \mu) = \int _{\mathcal{Z}_1} \boldsymbol{w}^H \boldsymbol{R} \boldsymbol{w} \, dG_1(\boldsymbol{R})\\ 
    &+ \lambda \left(\int _{\mathcal{Z}_1} dG_1(\boldsymbol{R}) - 1 \right) + \text{tr}\left(\boldsymbol{Y}\left(\int _{\mathcal{Z}_1} \boldsymbol{R} \, dG_1(\boldsymbol{R})\right)\right)\\
    &+ \text{tr}\left(\boldsymbol{X}\left(\int _{\mathcal{Z}_1} \boldsymbol{R} \, dG_1(\boldsymbol{R})-\boldsymbol{S}_0\right)\right) + \mu \rho _1,
\end{split}
\end{equation}
where $\lambda$ is the real-valued Lagrange multiplier associated with the first constraint, $\boldsymbol{Y} \succeq \boldsymbol{0}\ (\in \mathcal{H}^N_+)$ is the matrix of Lagrange multipliers associated with the second constraint, and $(\boldsymbol{X}, \mu) \in \mathcal{H}^N \times \mathbb{R}$ with $\|\boldsymbol{X}\|_F \leq \mu$, represents the Lagrange multipliers associated with third constraint. The dual problem for problem \eqref{eq:objectiveDRO} is as follows:
\begin{equation} \label{eq:dualProblemObjective1}
    \underset{\lambda, \boldsymbol{Y}, \boldsymbol{X}, \mu}{\text{inf}} \;\; \underset{G_1 \in \mathcal{M}_1}{\text{sup}} \mathcal{L}_1(G_1, \lambda, \boldsymbol{Y}, \boldsymbol{X}, \mu).
\end{equation}

Next, we determine the supremum of the Lagrangian function \eqref{eq:LagrangianObjectiveDRO}. 
% Since $\mu \geq 0$, a Hermitian matrix $\boldsymbol{X} \ (\in \mathcal{H}^N)$ can be defined such that $\mu = \|\boldsymbol{X}\|_F$. 
By applying some straightforward mathematical transformations, the Lagrangian function \eqref{eq:LagrangianObjectiveDRO} can be rewritten as (for notational simplicity, $\mathcal{L}_1(G_1, \lambda, \boldsymbol{Y}, \boldsymbol{X}, \mu)$ is henceforth denoted by $\mathcal{L}_1$):
\begin{equation} \label{eq:L1_2}
\begin{split}
    \mathcal{L}_1 &= \mu \rho _1 - \lambda - \text{tr}(\boldsymbol{X}\boldsymbol{S}_0)\\
    &\quad + \int _{\mathcal{Z}_1} (\text{tr} (\boldsymbol{R}(\boldsymbol{w} \boldsymbol{w}^H + \boldsymbol{X} + \boldsymbol{Y})) + \lambda) \, dG_1(\boldsymbol{R}).\\
\end{split}
\end{equation}
% By employing the Cauchy-Schwarz inequality for the Frobenius inner product and considering real part as a lower bound for the absolute value, a lower bound for the fourth term in \eqref{eq:L1_2} can be established, leading to the following lower bound for $\mathcal{L}_1$: 
% \begin{equation} \label{eq:L1_3}
% \begin{split}
%     \mathcal{L}_1 \geq &- \rho _1 \|\boldsymbol{X}\|_F - \lambda + \Re (\text{tr}(\boldsymbol{X} \boldsymbol{S}_0))\\
%     &+ \int _{\mathcal{Z}_1} (\lambda - \Re (\text{tr} ((\boldsymbol{w} \boldsymbol{w}^H + \boldsymbol{X} + \boldsymbol{Y}) \boldsymbol{R}))) \, dG_1(\boldsymbol{R}).\\
% \end{split}
% \end{equation}
To find the supremum of $\mathcal{L}_1$ with respect to $G_1 \in \mathcal{M}_1$, we need to determine the supremum of the integral term in \eqref{eq:L1_2}. If there exists an $\boldsymbol{R} \in  \mathcal{Z}_1$ for which the integrand is positive, the supremum of the integral term becomes $+ \infty$ (see \cite[Chapter 8]{Delage2021}). Therefore, in the dual formulation, the following constraint must be considered:
\begin{equation} \label{eq:constraintLambda}
    - \lambda \geq \text{tr} (\boldsymbol{R}(\boldsymbol{w} \boldsymbol{w}^H + \boldsymbol{X} + \boldsymbol{Y})), \; \forall \boldsymbol{R} \in \mathcal{Z}_1,
\end{equation}
and the supremum of $\mathcal{L}_1$ is as follows: 
\begin{equation}
    \underset{G_1 \in \mathcal{M}_1}{\text{sup}} \mathcal{L}_1 = \mu \rho _1 - \lambda - \text{tr}(\boldsymbol{X}\boldsymbol{S}_0). 
\end{equation}
The constraint \eqref{eq:constraintLambda} is equivalent to:
\begin{equation}
    - \lambda \geq \delta_{\mathcal{Z}_1}(\boldsymbol{w} \boldsymbol{w}^H + \boldsymbol{X} + \boldsymbol{Y}),
    % \underset{\boldsymbol{R} \in \mathcal{Z}_1}{\text{sup}} \Re (\text{tr} ((\boldsymbol{w} \boldsymbol{w}^H + \boldsymbol{X} + \boldsymbol{Y}) \boldsymbol{R})),
\end{equation}
where $\delta_{\mathcal{Z}_1}(\cdot)$ is defined as the support function of $\mathcal{Z}_1$ (see, e.g., \cite[Section 2.4]{Beck2017}). 
% the supremum can be determined using the Cauchy-Schwarz inequality for the Frobenius inner product and considering real part as a lower bound for the absolute value, similar to the approach used to derive \eqref{eq:L1_3}.
Consequently, the dual problem \eqref{eq:dualProblemObjective1} is formulated as:
\begin{equation}
\begin{split}
    \underset{\lambda, \mu, \boldsymbol{X}, \boldsymbol{Y}}{\text{minimize}} \; &\quad \mu \rho _1 - \lambda - \text{tr}(\boldsymbol{X}\boldsymbol{S}_0)\\
    {\text{subject to}} \; & - \lambda \geq \delta_{\mathcal{Z}_1}(\boldsymbol{w} \boldsymbol{w}^H + \boldsymbol{X} + \boldsymbol{Y}),\\
    & \; \mu \geq \|\boldsymbol{X}\|_F\\
    &\; \boldsymbol{X} \in \mathcal{H}^N, \boldsymbol{Y} \succeq \boldsymbol{0}\ (\in \mathcal{H}^N_+), \lambda \in \mathbb{R}.
\end{split} 
\end{equation}
% By reversing the sign of the objective function and 
By substituting the minimum value of $-\lambda$ and $\mu$ as defined by the first and second constraints, the equivalent minimization problem is:
\begin{equation} \label{eq:dualProblemObjectiveLast-1}
\begin{split}
    \underset{\boldsymbol{X}, \boldsymbol{Y}}{\text{minimize}} \; &\rho _1 \|\boldsymbol{X}\|_F + \delta_{\mathcal{Z}_1}(\boldsymbol{w} \boldsymbol{w}^H + \boldsymbol{X} + \boldsymbol{Y}) - \text{tr}(\boldsymbol{X} \boldsymbol{S}_0)\\
    {\text{subject to}} \;\; &\boldsymbol{X} \in \mathcal{H}^N, \boldsymbol{Y} \succeq \boldsymbol{0}\ (\in \mathcal{H}^N_+).
\end{split} 
\end{equation}
According to Proposition 3.4 in \cite{Shapiro2001}, strong duality holds if the uncertainty set is non-empty, which is evidently the case for the uncertainty set $\mathcal{D}_1$.

Assuming prior knowledge of an upper bound on the total interference-plus-noise power, the support set $\mathcal{Z}_1$ is defined as $\mathcal{Z}_1 = \{\boldsymbol{R} \in \mathcal{H}^N \; | \; \text{tr}(\boldsymbol{R}) \leq \rho _2, \boldsymbol{R} \succeq \boldsymbol{0}\}$, and the corresponding support function is expressed as (note that in this case, the second constraint in \eqref{eq:uncertainty_R} becomes redundant and can be excluded):
\begin{equation} \label{supportfunctionTraceConstraintSet}
    \delta_{\mathcal{Z}_1}(\boldsymbol{w} \boldsymbol{w}^H + \boldsymbol{X}) = \rho _2 \lambda_{\max}(\boldsymbol{w} \boldsymbol{w}^H + \boldsymbol{X}).
\end{equation}

Therefore, problem \eqref{eq:dualProblemObjectiveLast-1} simplifies to the following convex formulation:
\begin{equation}
\begin{split}
    \underset{\boldsymbol{X}}{\text{minimize}} \; &\rho _1 \|\boldsymbol{X}\|_F + \rho_2 \lambda_{\max}(\boldsymbol{w} \boldsymbol{w}^H + \boldsymbol{X}) - \text{tr}(\boldsymbol{X} \boldsymbol{S}_0)\\
    {\text{subject to}} \;\; &\boldsymbol{X} \in \mathcal{H}^N.
\end{split}  
\end{equation}

In the case where the Frobenius norm of the INC matrix is assumed to be upper-bounded, the set $\mathcal{Z}_1$ can be expressed as $\mathcal{Z}_1 = \{\boldsymbol{R} \in \mathcal{H}^N \; | \; \|\boldsymbol{R}\|_F \leq \rho _2, \boldsymbol{R} \succeq \boldsymbol{0}\}$. Using the Cauchy-Schwarz inequality for the Frobenius inner product, the following identity holds for the support function:
\begin{equation}
    \delta_{\mathcal{Z}_1}(\boldsymbol{w} \boldsymbol{w}^H + \boldsymbol{X}) = \rho _2 \left\|\boldsymbol{w} \boldsymbol{w}^H + \boldsymbol{X} \right\|_F.
\end{equation}

Consequently, problem \eqref{eq:dualProblemObjectiveLast-1} can be rewritten as the following convex formulation:
\begin{equation} \label{eq:dualProblemObjectiveLast}
\begin{split}
    \underset{\boldsymbol{X}}{\text{minimize}} \; &\rho _1 \|\boldsymbol{X}\|_F + \rho_2 \left \|\boldsymbol{w} \boldsymbol{w}^H + \boldsymbol{X} \right\|_F - \text{tr}(\boldsymbol{X} \boldsymbol{S}_0)\\
    {\text{subject to}} \;\; &\boldsymbol{X} \in \mathcal{H}^N.
\end{split} 
\end{equation}

We now proceed to examine the minimization problem in the constraint of \eqref{eq:DRO-based RAB}. Like before, by incorporating the conditions defining the uncertainty set as constraints, the equivalent problem is given by:
% \begin{equation} \label{eq:constraintDRO}
% \begin{split}
%     \underset{G_2 \in \mathcal{M}_2}{\text{minimize}} \quad &\int _{\mathcal{Z}_2} \boldsymbol{a}^H \boldsymbol{w} \boldsymbol{w}^H \boldsymbol{a}  \, dG_2(\boldsymbol{a})\\
%     \text{subject to} \quad &\int _{\mathcal{Z}_2} dG_2(\boldsymbol{a}) = 1,\\
%     &\int _{\mathcal{Z}_2} \boldsymbol{a} \, dG_2(\boldsymbol{a}) = \boldsymbol{a}_0,\\
%     &{\color{red}\int _{\mathcal{Z}_2} \boldsymbol{a} \boldsymbol{a}^H \, dG_2(\boldsymbol{a}) \preceq (1+\gamma_1)\boldsymbol{\Sigma} + \boldsymbol{a}_0 \boldsymbol{a}_0 ^H,}\\
%     &{\color{red}\int _{\mathcal{Z}_2} \boldsymbol{a} \boldsymbol{a}^H \, dG_2(\boldsymbol{a}) \succeq (1-\gamma_1)\boldsymbol{\Sigma} + \boldsymbol{a}_0 \boldsymbol{a}_0 ^H.}\\
% \end{split}
% \end{equation}
\begin{equation} \label{eq:constraintDRO}
\begin{split}
    \underset{G_2 \in \mathcal{M}_2}{\text{minimize}} \quad &\int _{\mathcal{Z}_2} \boldsymbol{a}^H \boldsymbol{w} \boldsymbol{w}^H \boldsymbol{a}  \, dG_2(\boldsymbol{a})\\
    \text{subject to} \quad &\int _{\mathcal{Z}_2} dG_2(\boldsymbol{a}) = 1,\\
    &\left \| \int _{\mathcal{Z}_2} \boldsymbol{a} \, dG_2(\boldsymbol{a}) - \boldsymbol{a}_0 \right \| \leq \gamma_1,\\
    &\int _{\mathcal{Z}_2} \boldsymbol{a} \boldsymbol{a}^H \, dG_2(\boldsymbol{a}) \preceq (1+\gamma_2)\boldsymbol{\Sigma} + \boldsymbol{a}_0 \boldsymbol{a}_0 ^H.\\
\end{split}
\end{equation}
The Lagrangian function for problem \eqref{eq:constraintDRO} is as follows:
\begin{equation} \label{eq:LagrangianConstraintDRO}
\begin{split}
    \mathcal{L}_2&(G_2, x, y, \boldsymbol{x}, \boldsymbol{Z}) = \int _{\mathcal{Z}_2} \boldsymbol{a}^H \boldsymbol{w} \boldsymbol{w}^H \boldsymbol{a} \, dG_2(\boldsymbol{a})\\ 
    &+ x \left(1 - \int _{\mathcal{Z}_2} dG_2(\boldsymbol{a}) \right)\\
    &- \Re \left(\boldsymbol{x}^H \left( \int _{\mathcal{Z}_2} \boldsymbol{a} \, dG_2(\boldsymbol{a}) - \boldsymbol{a}_0 \right)\right) - y \gamma_1\\
    &+ \text{tr} \left( \boldsymbol{Z} \left(\int _{\mathcal{Z}_2} \boldsymbol{a} \boldsymbol{a}^H \, dG_2(\boldsymbol{a}) -  (1+\gamma_2)\boldsymbol{\Sigma} - \boldsymbol{a}_0 \boldsymbol{a}_0^H \right) \right),\\
\end{split}
\end{equation}
where $x \in \mathbb{R}$ is the Lagrange multiplier associated with the first constraint, $(\boldsymbol{x}, y) \in \mathbb{C}^N \times \mathbb{R}$, where $\|\boldsymbol{x}\| \leq y$, represents the Lagrange multipliers associated with the second constraint and $\boldsymbol{Z} \succeq \boldsymbol{0}$ is the matrix of Lagrange multipliers associated with the third constraint. The dual problem for problem \eqref{eq:constraintDRO} is obtained by:
\begin{equation} \label{eq:dualProblemConstraint1}
    \underset{x, y, \boldsymbol{x}, \boldsymbol{Z}}{\text{sup}} \;\; \underset{G_2 \in \mathcal{M}_2}{\text{inf}} \mathcal{L}_2(G_2, x, y, \boldsymbol{x}, \boldsymbol{Z}).
\end{equation}
Through straightforward algebraic operations, the Lagrangian function \eqref{eq:LagrangianConstraintDRO} is rewritten as follows (denoting $\mathcal{L}_2(G_2, x, y, \boldsymbol{x}, \boldsymbol{Z})$ as $\mathcal{L}_2$ for notational simplicity):
% \begin{equation} \label{eq:LagrangianConstraintDRO2}
% \begin{split}
%     \mathcal{L}_2 = &\int _{\mathcal{Z}_2}(\boldsymbol{a}^H (\boldsymbol{w} \boldsymbol{w}^H {\color{red}+} \boldsymbol{Z} {\color{red}-\boldsymbol{Z'}}) \boldsymbol{a} - \Re (\boldsymbol{a}^H \boldsymbol{x}) - x) \, dG_2(\boldsymbol{a})\\
%     &+ x + \Re (\boldsymbol{a}_0^H \boldsymbol{x}) {\color{red}-} \Re (\text{tr}(\boldsymbol{Z}( {\color{red}(1+\gamma_1)}\boldsymbol{\Sigma} + \boldsymbol{a}_0 \boldsymbol{a}_0^H))\\
%     &\qquad \qquad \qquad \; \; {\color{red}+ \Re (\text{tr}(\boldsymbol{Z'}( (1-\gamma_1)\boldsymbol{\Sigma} + \boldsymbol{a}_0 \boldsymbol{a}_0^H))}.\\
% \end{split}
% \end{equation}
\begin{equation} \label{eq:LagrangianConstraintDRO2}
\begin{split}
    \mathcal{L}_2 = &\int _{\mathcal{Z}_2}(\boldsymbol{a}^H (\boldsymbol{w} \boldsymbol{w}^H + \boldsymbol{Z}) \boldsymbol{a} - \Re (\boldsymbol{a}^H \boldsymbol{x}) - x) \, dG_2(\boldsymbol{a})\\
    &+ x + \Re (\boldsymbol{a}_0^H \boldsymbol{x}) - y \gamma_1 - \text{tr}(\boldsymbol{Z}( (1+\gamma_2)\boldsymbol{\Sigma} + \boldsymbol{a}_0 \boldsymbol{a}_0^H)).\\
\end{split}
\end{equation}
In a similar manner, we impose a non-negativity constraint on the integrand within the dual formulation, leading to the infimum of the integral being zero rather than approaching $-\infty$. Therefore, the dual formulation is as follows:
% \begin{equation} \label{eq:dualProblemConstraint2}
% \begin{split}
%     \underset{x, {\color{red}y}, \boldsymbol{x}, \boldsymbol{Z}}{\text{maximize}} \;\; &x + \Re (\boldsymbol{a}_0^H \boldsymbol{x}) {\color{red}-} \Re (\text{tr}(\boldsymbol{Z}( {\color{red}(1+\gamma_1)}\boldsymbol{\Sigma} + \boldsymbol{a}_0 \boldsymbol{a}_0^H))\\
%     &\qquad \qquad \quad \; {\color{red}+ \Re (\text{tr}(\boldsymbol{Z'}( (1-\gamma_1)\boldsymbol{\Sigma} + \boldsymbol{a}_0 \boldsymbol{a}_0^H))}\\
%     {\text{subject to}} \;\; &\boldsymbol{a}^H (\boldsymbol{w} \boldsymbol{w}^H {\color{red}+} \boldsymbol{Z} {\color{red}-\boldsymbol{Z'}}) \boldsymbol{a} - \Re (\boldsymbol{a}^H \boldsymbol{x}) - x \geq 0,\\
%     &\qquad \qquad \qquad \qquad \qquad \qquad \qquad \qquad \forall \boldsymbol{a} \in \mathcal{Z}_2,\\
%     &{\color{red}\boldsymbol{Z}, \boldsymbol{Z'} \succeq \boldsymbol{0}}, \boldsymbol{x} \in \mathbb{C}^N, x \in \mathbb{R}.\\
% \end{split} 
% \end{equation}
\begin{equation} \label{eq:dualProblemConstraint11}
\begin{split}
    \underset{x, y, \boldsymbol{x}, \boldsymbol{Z}}{\text{maximize}} \;\; &x + \Re (\boldsymbol{a}_0^H \boldsymbol{x}) - y \gamma_1 - \text{tr}(\boldsymbol{Z}( (1+\gamma_2)\boldsymbol{\Sigma} + \boldsymbol{a}_0 \boldsymbol{a}_0^H))\\
    {\text{subject to}} \;\; &\boldsymbol{a}^H (\boldsymbol{w} \boldsymbol{w}^H + \boldsymbol{Z}) \boldsymbol{a} - \Re (\boldsymbol{a}^H \boldsymbol{x}) - x \geq 0, \; \forall \boldsymbol{a} \in \mathcal{Z}_2,\\
    &\|\boldsymbol{x}\| \leq y,\\
    &\boldsymbol{Z} \succeq \boldsymbol{0}, \boldsymbol{x} \in \mathbb{C}^N, x, y \in \mathbb{R}.\\
\end{split} 
\end{equation}
By applying the second constraint on the Lagrange multipliers to the objective function, the problem in \eqref{eq:dualProblemConstraint11} becomes equivalent to the following:
\begin{equation} \label{eq:dualProblemConstraint2}
\begin{split}
    \underset{x, \boldsymbol{x}, \boldsymbol{Z}}{\text{maximize}} \;\; &x + \Re (\boldsymbol{a}_0^H \boldsymbol{x}) - \! \gamma_1 \|\boldsymbol{x}\| \!- \text{tr}(\boldsymbol{Z}((1\!+\!\gamma_2)\boldsymbol{\Sigma} \!+\! \boldsymbol{a}_0 \boldsymbol{a}_0^H))\\
    {\text{subject to}} \;\; &\boldsymbol{a}^H (\boldsymbol{w} \boldsymbol{w}^H + \boldsymbol{Z}) \boldsymbol{a} - \Re (\boldsymbol{a}^H \boldsymbol{x}) - x \geq 0, \; \forall \boldsymbol{a} \in \mathcal{Z}_2,\\
    &\boldsymbol{Z} \succeq \boldsymbol{0}, \boldsymbol{x} \in \mathbb{C}^N, x \in \mathbb{R}.\\
\end{split} 
\end{equation}

The strong duality is similarly ensured due to the non-emptiness of the uncertainty set. Furthermore, the semi-infinite constraint in \eqref{eq:dualProblemConstraint2} can be expressed as:
\begin{equation} \label{eq:dualProblemConstraint_Constraint}
    \left[
    \begin{array}{l}
        \boldsymbol{a}\\
        1
    \end{array}
    \right]^H
    \left[
    \begin{array}{cc}
        \boldsymbol{w} \boldsymbol{w}^H + \boldsymbol{Z} &-\frac{\boldsymbol{x}}{2}\\
        -\frac{\boldsymbol{x}^H}{2} &-x\\
    \end{array}
    \right]
    \left[
    \begin{array}{l}
        \boldsymbol{a}\\
        1
    \end{array}
    \right] \geq 0,  \; \forall \boldsymbol{a} \in \mathcal{Z}_2.
\end{equation}
Similar to the approach outlined in \cite{Zymler2013}, when $\mathcal{Z}_2 = \mathbb{C}^{N}$, \eqref{eq:dualProblemConstraint_Constraint} implies:
\begin{equation} \label{eq:dualProblemConstraint_Constraint2}
    \left[
    \begin{array}{cc}
        \boldsymbol{w} \boldsymbol{w}^H + \boldsymbol{Z} &-\frac{\boldsymbol{x}}{2}\\
        -\frac{\boldsymbol{x}^H}{2} &-x\\
    \end{array}
    \right] \succeq \boldsymbol{0}.
\end{equation}

Additionally, suppose that $\mathcal{Z}_2$ is defined by at most two quadratic constraints. Specifically,
\begin{equation} \label{constraint_quadratic_defined}
    \boldsymbol{a}^H \boldsymbol{Q}_i \boldsymbol{a} + 2 \Re (\boldsymbol{q}_i^H \boldsymbol{a}) + q_i \leq 0, \quad i = 1, \dots, k, 
\end{equation}
where $k \in \{1, 2\}$. According to the $S$-procedure (see, e.g., \cite{Fradkov1979}), if there exists a complex vector $\overline{\boldsymbol{a}}$ in the interior of the set $\mathcal{Z}_2$, the semi-infinite constraint in \eqref{eq:dualProblemConstraint2} can be reformulated as the following QMI:
\begin{equation} \label{constraint_S_lemma}
    \left[
    \begin{array}{cc}
        \boldsymbol{w} \boldsymbol{w}^H + \boldsymbol{Z} &-\frac{\boldsymbol{x}}{2}\\
        -\frac{\boldsymbol{x}^H}{2} &-x\\
    \end{array}
    \right] \succeq \sum _{i=1}^k \tau _i
    \left[
    \begin{array}{cc}
        \boldsymbol{Q}_i &\boldsymbol{q}_i\\
        \boldsymbol{q}_i^H &q_i\\
    \end{array}
    \right],\; \tau _i \leq 0. 
\end{equation}

In the ideal scenario where there is no uncertainty in the steering vector $\boldsymbol{a}$, its norm satisfies $\|\boldsymbol{a}\|^2 = N$. Consequently, a reasonable choice for the support set $\mathcal{Z}_2$ is:
\begin{equation} \label{supportset_steeringvector}
\mathcal{Z}_2 = \{ \boldsymbol{a} \in \mathbb{C}^{N} \; | \; (1-\Delta)N \leq \|\boldsymbol{a}\|^2 \leq (1+\Delta)N \},
\end{equation}
where $0 < \Delta < 1$ quantifies the allowable deviation from the ideal norm. For the quadratic constraints defined in \eqref{supportset_steeringvector}, the QMI constraint in \eqref{constraint_S_lemma} can be rewritten as:
\begin{equation} \label{constraint_S_lemma_2}
\begin{split}
    &\left[
    \begin{array}{cc}
        \boldsymbol{w} \boldsymbol{w}^H + \boldsymbol{Z} &-\frac{\boldsymbol{x}}{2}\\
        -\frac{\boldsymbol{x}^H}{2} &-x\\
    \end{array}
    \right] \succeq\\
    &\tau _1
    \left[
    \begin{array}{cc}
        \boldsymbol{I} &\boldsymbol{0}\\
        \boldsymbol{0} &-(1+\Delta)N\\
    \end{array}
    \right] 
    + \tau _2
    \left[
    \begin{array}{cc}
        -\boldsymbol{I} &\boldsymbol{0}\\
        \boldsymbol{0} &(1-\Delta)N\\
    \end{array}
    \right],\\
    &\tau _1, \tau _2 \leq 0.
\end{split}
\end{equation}

By incorporating the result from \eqref{constraint_S_lemma_2}, along with the dual formulation derived in \eqref{eq:dualProblemObjectiveLast} and \eqref{eq:dualProblemConstraint2} for the inner DRO problems, the DRO-based RAB problem \eqref{eq:DRO-based RAB} can be expressed as:
\begin{equation} \label{eq:DRO-based RAB_last}
\begin{split}
    \text{minimize} \;\; &\rho _1 \|\boldsymbol{X}\|_F + \rho_2 \left\|\boldsymbol{w} \boldsymbol{w}^H + \boldsymbol{X}\right\|_F - \text{tr}(\boldsymbol{X} \boldsymbol{S}_0)\\
    \text{subject to} \;\; &x + \Re (\boldsymbol{a}_0^H \boldsymbol{x}) - \gamma_1 \|\boldsymbol{x}\|\\
    & \qquad \qquad \quad \; - \text{tr}(\boldsymbol{Z}((1+\gamma_2)\boldsymbol{\Sigma} + \boldsymbol{a}_0 \boldsymbol{a}_0^H)) \geq 1,\\
    &\left[
    \begin{array}{cc}
        \boldsymbol{w} \boldsymbol{w}^H + \boldsymbol{Z} &-\frac{\boldsymbol{x}}{2}\\
        -\frac{\boldsymbol{x}^H}{2} &-x\\
    \end{array}
    \right] \succeq\\
    &\hspace{-4pt}\tau _1
    \left[
    \begin{array}{cc}
        \boldsymbol{I} &\boldsymbol{0}\\
        \boldsymbol{0} &-(1+\Delta)N\\
    \end{array}
    \right] 
    + \tau _2
    \left[
    \begin{array}{cc}
        -\boldsymbol{I} &\boldsymbol{0}\\
        \boldsymbol{0} &(1-\Delta)N\\
    \end{array}
    \right],\\
    &\tau _1, \tau _2 \leq 0,\\
    &\boldsymbol{w},\boldsymbol{x} \in \mathbb{C}^N, \; \boldsymbol{X} \in \mathcal{H}^N, \; \boldsymbol{Z} \succeq \boldsymbol{0}, \; x, \tau _1, \tau _2 \in \mathbb{R}.
\end{split}
\end{equation}
Note that here, the semi-infinite constraint is relaxed by being directly imposed on the objective function in \eqref{eq:dualProblemConstraint2}, rather than on its optimal value.

\section{A Rank-One Solution for the LMI Relaxation Problem for the DRO-based RAB} \label{A Rank-One Solution Procedure for the LMI Relaxation Problem for the DRO-based RAB}

The DRO-based RAB formulation in \eqref{eq:DRO-based RAB_last} is non-convex due to the presence of the term $\boldsymbol{w} \boldsymbol{w}^H$, which appears in both the objective function and the second constraint. The term $\boldsymbol{w} \boldsymbol{w}^H$ inherently satisfies $\boldsymbol{w} \boldsymbol{w}^H \succeq \boldsymbol{0}$ and $\operatorname{rank}(\boldsymbol{w} \boldsymbol{w}^H) = 1$. To make the problem tractable, $\boldsymbol{w} \boldsymbol{w}^H$ can be relaxed to $\boldsymbol{W} \succeq \boldsymbol{0}$ by dropping the rank-one constraint, thereby allowing to reformulate problem \eqref{eq:DRO-based RAB_last} as the following convex LMI problem:
\begin{equation} \label{eq:DRO-based RAB_last_relaxed}
\begin{split}
    \text{minimize} \;\; &\rho _1 \|\boldsymbol{X}\|_F + \rho_2 \left\|\boldsymbol{W} + \boldsymbol{X}\right\|_F - \text{tr}(\boldsymbol{X} \boldsymbol{S}_0)\\
    \text{subject to} \;\; &x + \Re (\boldsymbol{a}_0^H \boldsymbol{x}) - \gamma_1 \|\boldsymbol{x}\|\\
    & \qquad \qquad \quad \; - \text{tr}(\boldsymbol{Z}((1+\gamma_2)\boldsymbol{\Sigma} + \boldsymbol{a}_0 \boldsymbol{a}_0^H)) \geq 1,\\
    &\left[
    \begin{array}{cc}
        \boldsymbol{W} + \boldsymbol{Z} &-\frac{\boldsymbol{x}}{2}\\
        -\frac{\boldsymbol{x}^H}{2} &-x\\
    \end{array}
    \right] \succeq\\
    &\hspace{-4pt}\tau _1
    \left[
    \begin{array}{cc}
        \boldsymbol{I} &\boldsymbol{0}\\
        \boldsymbol{0} &-(1+\Delta)N\\
    \end{array}
    \right] 
    + \tau _2
    \left[
    \begin{array}{cc}
        -\boldsymbol{I} &\boldsymbol{0}\\
        \boldsymbol{0} &(1-\Delta)N\\
    \end{array}
    \right],\\
    &\tau _1, \tau _2 \leq 0,\\
    &\boldsymbol{x} \in \mathbb{C}^N, \; \boldsymbol{X} \in \mathcal{H}^N, \; \boldsymbol{W}, \boldsymbol{Z} \succeq \boldsymbol{0}, \; x, \tau _1, \tau _2 \in \mathbb{R}.
\end{split}
\end{equation}

\subsection{An Iterative Algorithm to Obtain a Rank-One Solution}

Although the relaxed problem provides a tractable formulation, the resulting solution $\boldsymbol{W}^\star$ is not guaranteed to be of rank one. To recover a feasible beamforming vector from this relaxed formulation, we need a criterion to first check whether $\boldsymbol{W}^\star$ is of rank one. Should the solution $\boldsymbol{W}^ \star$ to problem \eqref{eq:DRO-based RAB_last_relaxed} be of rank one, specifically $\boldsymbol{W}^ \star = \boldsymbol{w}^ \star {\boldsymbol{w}^ \star}^H$, then $\boldsymbol{w}^ \star$ also serves as an optimal beamforming vector for problem \eqref{eq:DRO-based RAB_last}. In cases where $\boldsymbol{W}^ \star$ is not rank-one, alternative rank-one solutions are sought for \eqref{eq:DRO-based RAB_last_relaxed}. The following lemma provides a criterion based on the equality of the trace and the Frobenius norm of $\boldsymbol{W}$ to determine its rank-one property. This result is particularly useful for introducing a penalty term that enforces the solution to be rank-one.

\begin{lemma} \label{lemma4.1}
    Consider a matrix $\boldsymbol{W} \succeq \boldsymbol{0}$ with $\boldsymbol{W} \neq \boldsymbol{0}$. The condition $\text{tr}(\boldsymbol{W}) = \|\boldsymbol{W}\|_F$ implies that $\boldsymbol{W}$ is of rank one.
\end{lemma}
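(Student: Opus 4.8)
The plan is to work directly with the eigenvalues of $\boldsymbol{W}$. Since $\boldsymbol{W} \succeq \boldsymbol{0}$ is Hermitian, the spectral theorem gives a set of real nonnegative eigenvalues $\lambda_1, \dots, \lambda_N \geq 0$. The key observation is that both quantities in the hypothesis have clean expressions in terms of these eigenvalues: the trace satisfies $\text{tr}(\boldsymbol{W}) = \sum_{i=1}^N \lambda_i$, while the Frobenius norm satisfies $\|\boldsymbol{W}\|_F = \sqrt{\text{tr}(\boldsymbol{W}^2)} = \left(\sum_{i=1}^N \lambda_i^2\right)^{1/2}$. So the condition $\text{tr}(\boldsymbol{W}) = \|\boldsymbol{W}\|_F$ becomes the purely scalar statement $\sum_i \lambda_i = \left(\sum_i \lambda_i^2\right)^{1/2}$, equivalently $\left(\sum_i \lambda_i\right)^2 = \sum_i \lambda_i^2$ after squaring (both sides are nonnegative).

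First I would expand the squared left-hand side to obtain $\sum_i \lambda_i^2 + \sum_{i \neq j} \lambda_i \lambda_j = \sum_i \lambda_i^2$, which forces the cross-term sum to vanish: $\sum_{i \neq j} \lambda_i \lambda_j = 0$. Because every $\lambda_i \geq 0$, each summand $\lambda_i \lambda_j$ is nonnegative, so the whole sum is zero if and only if every product $\lambda_i \lambda_j$ with $i \neq j$ equals zero. This means no two distinct eigenvalues can both be strictly positive; at most one of the $\lambda_i$ is nonzero. Since $\boldsymbol{W} \neq \boldsymbol{0}$, at least one eigenvalue is strictly positive, hence exactly one eigenvalue is positive and the rest are zero. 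Therefore $\operatorname{rank}(\boldsymbol{W}) = 1$, as claimed.

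The argument is essentially a Cauchy--Schwarz equality-condition statement in disguise: the relation $\left(\sum_i \lambda_i\right)^2 \leq N \sum_i \lambda_i^2$ is the standard inequality, but here the sharper elementary identity that $\left(\sum_i \lambda_i\right)^2 = \sum_i \lambda_i^2$ collapses the support of the eigenvalue vector to a single index is exactly what we need. I do not anticipate any genuine obstacle; the only point requiring mild care is ensuring the squaring step is reversible, which is guaranteed since both $\text{tr}(\boldsymbol{W})$ and $\|\boldsymbol{W}\|_F$ are nonnegative (indeed $\text{tr}(\boldsymbol{W}) > 0$ because $\boldsymbol{W}$ is a nonzero PSD matrix). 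The nonnegativity of the eigenvalues, coming from the PSD hypothesis, is the essential ingredient that makes the vanishing of the cross terms force sparsity of the spectrum; without it one could have cancellation and the conclusion would fail.
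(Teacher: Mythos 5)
Your proof is correct and follows essentially the same route as the paper: both reduce the condition $\text{tr}(\boldsymbol{W}) = \|\boldsymbol{W}\|_F$ to the eigenvalue identity $\left(\sum_i \lambda_i\right)^2 = \sum_i \lambda_i^2$ and conclude that at most one eigenvalue is nonzero, with $\boldsymbol{W} \neq \boldsymbol{0}$ forcing exactly one. Your version merely spells out the cross-term expansion $\sum_{i \neq j} \lambda_i \lambda_j = 0$ that the paper leaves implicit.
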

% \begin{proof}
%     See Appendix~\ref{appendix:proof_lemma1}.
% \end{proof}
\begin{proof}
    Let $\lambda _1, \dots, \lambda _N \geq 0$ denote the eigenvalues of the matrix $\boldsymbol{W}$. The condition $\text{tr}(\boldsymbol{W}) = \|\boldsymbol{W}\|_F$ can be equivalently expressed as:
    \begin{equation}
        \left (\sum _{i=1}^N \lambda _i \right)^2 = \sum _{i=1}^N \lambda _i ^2.
    \end{equation}
    This equality holds if and only if there is at most one non-zero eigenvalue. Given that $\boldsymbol{W} \neq \boldsymbol{0}$, there must be exactly one non-zero eigenvalue, implying that $\boldsymbol{W}$ is of rank one.
\end{proof}

Note that the condition $\text{tr}(\boldsymbol{W}) = \|\boldsymbol{W}\|_F$ can be equivalently reformulated as:
\begin{equation} \label{rankOneConstraintReformulation}
    \text{tr}(\boldsymbol{W}) - \frac{\text{tr}(\boldsymbol{W} \boldsymbol{W})}{\|\boldsymbol{W}\|_F} = 0.
\end{equation}
When the solution $\boldsymbol{W}^\star$ is not of rank one, we apply an iterative refinement procedure to encourage a rank-one structure. This is achieved by introducing a penalty term based on the result of Lemma~\ref{lemma4.1}, which quantifies how far a matrix is from being rank-one. The idea is to iteratively minimize this deviation by solving a sequence of LMI problems, each progressively steering the solution closer to a rank-one matrix. Using \eqref{rankOneConstraintReformulation}, we can develop an iterative procedure to find a rank-one solution for \eqref{eq:DRO-based RAB_last_relaxed}. During the $k$-th iteration of this procedure, the following LMI problem is addressed, incorporating a penalty term, which is adaptively updated based on the previous solution, to enforce the rank-one constraint:
\begin{equation} \label{eq:DRO-based RAB_last_relaxed_rankOne}
\begin{split}
    \text{minimize} \;\; &\rho _1 \|\boldsymbol{X}\|_F + \rho_2 \left\|\boldsymbol{W} + \boldsymbol{X}\right\|_F - \text{tr}(\boldsymbol{X} \boldsymbol{S}_0)\\
    &\qquad \qquad \qquad \qquad + \alpha \left( \text{tr}(\boldsymbol{W}) - \frac{\text{tr}(\boldsymbol{W} \boldsymbol{W}_k)}{\|\boldsymbol{W}_k\|_F}\right)\\
    \text{subject to} \;\; &x + \Re (\boldsymbol{a}_0^H \boldsymbol{x}) - \gamma_1 \|\boldsymbol{x}\|\\
    & \qquad \qquad \quad \; - \text{tr}(\boldsymbol{Z}((1+\gamma_2)\boldsymbol{\Sigma} + \boldsymbol{a}_0 \boldsymbol{a}_0^H)) \geq 1,\\
    &\left[
    \begin{array}{cc}
        \boldsymbol{W} + \boldsymbol{Z} &-\frac{\boldsymbol{x}}{2}\\
        -\frac{\boldsymbol{x}^H}{2} &-x\\
    \end{array}
    \right] \succeq\\
    &\hspace{-4pt}\tau _1
    \left[
    \begin{array}{cc}
        \boldsymbol{I} &\boldsymbol{0}\\
        \boldsymbol{0} &-(1+\Delta)N\\
    \end{array}
    \right] 
    + \tau _2
    \left[
    \begin{array}{cc}
        -\boldsymbol{I} &\boldsymbol{0}\\
        \boldsymbol{0} &(1-\Delta)N\\
    \end{array}
    \right],\\
    &\tau _1, \tau _2 \leq 0,\\
    &\boldsymbol{x} \in \mathbb{C}^N, \; \boldsymbol{X} \in \mathcal{H}^N, \; \boldsymbol{W}, \boldsymbol{Z} \succeq \boldsymbol{0}, \; x, \tau _1, \tau _2 \in \mathbb{R}.
\end{split}
\end{equation}
Here, $\alpha > 0$ denotes a predefined penalty parameter chosen sufficiently large to enforce the rank-one constraint. The high-rank solution $\boldsymbol{W}^ \star$, obtained from \eqref{eq:DRO-based RAB_last_relaxed}, is used as the initial point $\boldsymbol{W}_0$ in the iterative procedure detailed in Algorithm~\ref{alg:alg1}. This procedure guarantees convergence to a solution that satisfies the rank-one condition, with increasing accuracy as the iteration progresses. The penalty parameter $\alpha$ plays a key role in enforcing this behavior.

\begin{algorithm}
    \caption{Finding a Rank-One Solution for Problem \eqref{eq:DRO-based RAB_last_relaxed}} \label{alg:alg1}

    \textbf{Input:} $\boldsymbol{S}_0, \boldsymbol{\Sigma}, \boldsymbol{a}_0, \gamma_1, \gamma_2, \rho _1, \rho _2, \alpha, \eta ;$\\
    \textbf{Output:} A rank-one solution $\boldsymbol{w}^ \star {\boldsymbol{w}^ \star}^H$ for problem \eqref{eq:DRO-based RAB_last_relaxed};

    \begin{algorithmic}[1]
        \STATE Solve \eqref{eq:DRO-based RAB_last_relaxed}, returning $\boldsymbol{W}^ \star$
        \IF{$\boldsymbol{W}^ \star$ is of rank one}
            \STATE Output $\boldsymbol{w}^ \star$ with $\boldsymbol{W}^ \star = \boldsymbol{w}^ \star {\boldsymbol{w}^ \star}^H$, and terminate;
        \ENDIF
        \STATE $k \gets 0$; Let $\boldsymbol{W}_k$ be the optimal (high-rank) solution $\boldsymbol{W}^ \star$ for \eqref{eq:DRO-based RAB_last_relaxed};
        \REPEAT
            \STATE Solve the LMI problem \eqref{eq:DRO-based RAB_last_relaxed_rankOne}, obtaining solution $\boldsymbol{W}_{k+1}$;\STATE $k \gets k + 1$;
        \UNTIL{$\text{tr}(\boldsymbol{W}_k) - \frac{\text{tr}(\boldsymbol{W}_k \boldsymbol{W}_{k-1})}{\|\boldsymbol{W}_{k-1}\|_F} \leq \eta$}
        \STATE Output $\boldsymbol{w}^ \star$ with $\boldsymbol{W}_k = \boldsymbol{w}^ \star {\boldsymbol{w}^ \star}^H$.
    \end{algorithmic}
\end{algorithm}

The algorithm solves a sequence of LMI problems, each of the same fixed structure and size as relaxed problem \eqref{eq:DRO-based RAB_last_relaxed}. Although the procedure is iterative, the number of iterations required for convergence is consistently small relative to the problem dimension, which is typical for algorithms that approximate non-convex problems using a sequence of convex problems. Therefore, the additional cost introduced by the iterative nature of the algorithm is limited to a small multiplicative factor, corresponding to the number of sequential convex approximations, which is significantly smaller than the problem size and thus negligible for the overall complexity. This means that the overall computational complexity remains on the same order as that of solving a single LMI problem using interior-point methods, as commonly described in the literature (see, e.g., \cite{Nesterov1994}).

Let us investigate how the termination criterion in Step 9 of Algorithm~\ref{alg:alg1} ensures that $\boldsymbol{W}_k$ is a rank-one solution for \eqref{eq:DRO-based RAB_last_relaxed_rankOne}. The termination criterion implies:
\begin{equation} \label{terminationCriterion}
    \text{tr}(\boldsymbol{W}_k) \approx \frac{\text{tr}(\boldsymbol{W}_k \boldsymbol{W}_{k-1})}{\|\boldsymbol{W}_{k-1}\|_F}.
\end{equation}
According to the Cauchy-Schwarz inequality for the Frobenius inner product, the following inequality holds for the right-hand side (RHS) of \eqref{terminationCriterion}:
\begin{equation} \label{terminationCriterionInequality1}
    \frac{\text{tr}(\boldsymbol{W}_k \boldsymbol{W}_{k-1})}{\|\boldsymbol{W}_{k-1}\|_F} \leq \|\boldsymbol{W}_k\|_F.
\end{equation}
Furthermore, since $\boldsymbol{W}_k$ is a PSD matrix, the left-hand side (LHS) of \eqref{terminationCriterion} satisfies the following inequality:
\begin{equation} \label{terminationCriterionInequality2}
    \|\boldsymbol{W}_k\|_F \leq \text{tr}(\boldsymbol{W}_k).
\end{equation}
Referring to \eqref{terminationCriterionInequality1} and \eqref{terminationCriterionInequality2}, the expression in the termination criterion evaluates to a non-negative value. By combining \eqref{terminationCriterion}, \eqref{terminationCriterionInequality1}, and \eqref{terminationCriterionInequality2}, we conclude that $\text{tr}(\boldsymbol{W}_k) \approx \|\boldsymbol{W}_k\|_F$, implying that $\boldsymbol{W}_k$ is a rank-one solution, based on Lemma~\ref{lemma4.1}.

\subsection{Convergence Analysis of the Iterative Algorithm}

To ensure the convergence of Algorithm~\ref{alg:alg1}, we need to demonstrate the occurrence of the termination criterion. Prior to that, we analyze the optimal values obtained in Algorithm~\ref{alg:alg1}. Suppose that in Step 7 of Algorithm~\ref{alg:alg1}, both an optimal solution $\boldsymbol{W}_{k+1}$ and the optimal value $v_{k+1}$ are obtained for $k=0,1,2,\ldots$. We aim to prove that the sequence $\{v_k\}_k$ of optimal values is non-increasing.

\begin{proposition}\label{nonincreasingOptimalValues}
The sequence of optimal values obtained by Algorithm~\ref{alg:alg1} is non-increasing, specifically,
\begin{equation}
    v_1 \geq v_2 \geq v_3 \geq \cdots.
\end{equation}
\end{proposition}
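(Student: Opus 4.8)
The plan is to exploit the fact that, across all iterations of Algorithm~\ref{alg:alg1}, the feasible set of problem \eqref{eq:DRO-based RAB_last_relaxed_rankOne} is identical; only the objective changes, and only through its penalty term, which depends on the reference matrix $\boldsymbol{W}_k$ carried over from the previous iteration. This is precisely the structure of a majorization--minimization scheme, so I expect a single one-step comparison between consecutive optimal values to suffice, followed by an appeal to the Cauchy--Schwarz inequality already invoked in \eqref{terminationCriterionInequality1}.

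First I would fix notation by writing the part of the objective that is independent of the reference matrix as $f(\boldsymbol{X},\boldsymbol{W}) \triangleq \rho_1\|\boldsymbol{X}\|_F + \rho_2\|\boldsymbol{W}+\boldsymbol{X}\|_F - \text{tr}(\boldsymbol{X}\boldsymbol{S}_0)$, and letting $(\boldsymbol{X}_{k+1},\boldsymbol{W}_{k+1},\ldots)$ denote an optimizer attaining $v_{k+1}$ in Step~7, so that
\begin{equation}
v_{k+1} = f(\boldsymbol{X}_{k+1},\boldsymbol{W}_{k+1}) + \alpha\left(\text{tr}(\boldsymbol{W}_{k+1}) - \frac{\text{tr}(\boldsymbol{W}_{k+1}\boldsymbol{W}_k)}{\|\boldsymbol{W}_k\|_F}\right).
\end{equation}
The next iteration uses $\boldsymbol{W}_{k+1}$ as the reference and yields $v_{k+2}$. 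The key step is to note that, because the constraints do not depend on the reference matrix, the optimizer $(\boldsymbol{X}_{k+1},\boldsymbol{W}_{k+1},\ldots)$ remains feasible for the $(k+1)$-th problem; substituting it as a candidate therefore upper-bounds $v_{k+2}$:
\begin{equation}
v_{k+2} \leq f(\boldsymbol{X}_{k+1},\boldsymbol{W}_{k+1}) + \alpha\left(\text{tr}(\boldsymbol{W}_{k+1}) - \frac{\text{tr}(\boldsymbol{W}_{k+1}\boldsymbol{W}_{k+1})}{\|\boldsymbol{W}_{k+1}\|_F}\right).
\end{equation}

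I would then use $\text{tr}(\boldsymbol{W}_{k+1}\boldsymbol{W}_{k+1}) = \|\boldsymbol{W}_{k+1}\|_F^2$ to collapse this penalty to $\text{tr}(\boldsymbol{W}_{k+1}) - \|\boldsymbol{W}_{k+1}\|_F$. Subtracting the two displays, the $f(\cdot)$ terms and the $\alpha\,\text{tr}(\boldsymbol{W}_{k+1})$ terms cancel, leaving
\begin{equation}
v_{k+1} - v_{k+2} \geq \alpha\left(\|\boldsymbol{W}_{k+1}\|_F - \frac{\text{tr}(\boldsymbol{W}_{k+1}\boldsymbol{W}_k)}{\|\boldsymbol{W}_k\|_F}\right) \geq 0,
\end{equation}
where the final inequality is exactly the Cauchy--Schwarz bound \eqref{terminationCriterionInequality1} with $(\boldsymbol{W}_{k+1},\boldsymbol{W}_k)$ in place of $(\boldsymbol{W}_k,\boldsymbol{W}_{k-1})$. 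Since this holds for every $k\geq 0$, the chain $v_1 \geq v_2 \geq v_3 \geq \cdots$ follows.

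There is no serious analytic obstacle here; this is a clean descent argument. The points requiring care are, first, confirming that the feasible set is invariant across iterations so that the previous optimizer is admissible in the next problem, and second, the bookkeeping of the penalty's dependence on the reference matrix---in particular recognizing that evaluating the $(k+1)$-th penalty at its own reference point $\boldsymbol{W}_{k+1}$ reduces it to $\text{tr}(\boldsymbol{W}_{k+1}) - \|\boldsymbol{W}_{k+1}\|_F$. The remaining detail is to keep the indexing consistent with the statement, noting that $v_{k+1}$ is the value produced from reference $\boldsymbol{W}_k$ starting at $k=0$, so that establishing $v_{k+1} \geq v_{k+2}$ for all $k \geq 0$ is precisely the asserted monotone chain.
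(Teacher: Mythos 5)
Your proposal is correct and follows essentially the same argument as the paper: both decompose the objective into the data term plus the penalty term, use the invariance of the feasible set to substitute the previous optimizer into the next subproblem, and invoke the Cauchy--Schwarz inequality for the Frobenius inner product to compare the penalty evaluated with reference $\boldsymbol{W}_k$ against that with reference $\boldsymbol{W}_{k+1}$. The only difference is cosmetic ordering—you apply optimality first and Cauchy--Schwarz second, while the paper chains them in the reverse order—so the two proofs are the same in substance.
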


% \begin{proof}
%     See Appendix~\ref{appendix:proof_proposition2}.
% \end{proof}

\begin{proof}
Let the optimal value for problem \eqref{eq:DRO-based RAB_last_relaxed_rankOne} be expressed as
% \begin{multline} \label{objective_f1_f2}
%     v_{k+1} = f_1(\boldsymbol{W}_{k+1}, \boldsymbol{X}_{k+1}, \boldsymbol{Y}_{k+1})\\
%     + \alpha f_{2,k}(\boldsymbol{W}_{k+1}), \quad k=0,1,2,\ldots,
% \end{multline}
\begin{multline} \label{objective_f1_f2}
    v_{k+1} = f_1(\boldsymbol{W}_{k+1}, \boldsymbol{X}_{k+1}) + \alpha f_{2,k}(\boldsymbol{W}_{k+1}),\\
    k=0,1,2,\ldots,
\end{multline}
where
\begin{equation}\label{objective_f1}
\begin{split} 
    &f_1(\boldsymbol{W}_{k+1}, \boldsymbol{X}_{k+1})\\ 
    &= \rho _1 \|\boldsymbol{X}_{k+1}\|_F + \rho _2 \|\boldsymbol{W}_{k+1} + \boldsymbol{X}_{k+1}\|_F - \text{tr}(\boldsymbol{X}_{k+1} \boldsymbol{S}_0),\\
\end{split}
\end{equation}
and
\begin{equation}\label{objective_f2}
    f_{2,k}(\boldsymbol{W}_{k+1}) = \text{tr}(\boldsymbol{W}_{k+1}) - \frac{\text{tr}(\boldsymbol{W}_{k+1} \boldsymbol{W}_k)}{\|\boldsymbol{W}_k\|_F}.
\end{equation}

Observe that the following inequality
\begin{equation}\label{inequality_f2}
    f_{2,k}(\boldsymbol{W}_{k+1}) \geq f_{2,k+1}(\boldsymbol{W}_{k+1}), \quad k=0,1,2,\ldots,
\end{equation}
implies that
{\small \begin{equation}\label{inequality_f2_equivalent}
    \text{tr}(\boldsymbol{W}_{k+1}) - \frac{\text{tr}(\boldsymbol{W}_{k+1} \boldsymbol{W}_k)}{\|\boldsymbol{W}_k\|_F} \geq \text{tr}(\boldsymbol{W}_{k+1}) - \frac{\text{tr}(\boldsymbol{W}_{k+1} \boldsymbol{W}_{k+1})}{\|\boldsymbol{W}_{k+1}\|_F},
\end{equation}
}\noindent 
which is equivalent to the Cauchy-Schwarz inequality for the Frobenius inner product of $\boldsymbol{W}_k$ and $\boldsymbol{W}_{k+1}$:
\begin{equation}\label{inequality_f2_equivalent_2}
    \|\boldsymbol{W}_{k+1}\|_F \|\boldsymbol{W}_k\|_F \geq \text{tr}(\boldsymbol{W}_{k+1} \boldsymbol{W}_k).
\end{equation}

To show $v_{k+1} \geq v_{k+2}$ for $k=0,1,2,\ldots$, we note that
\begin{equation}
\begin{split}
    v_{k+1} &= f_1(\boldsymbol{W}_{k+1}, \boldsymbol{X}_{k+1}) + \alpha f_{2,k}(\boldsymbol{W}_{k+1}) \\
    &\ge f_1(\boldsymbol{W}_{k+1}, \boldsymbol{X}_{k+1}) + \alpha f_{2,k+1}(\boldsymbol{W}_{k+1}) \\
    &\ge f_1(\boldsymbol{W}_{k+2}, \boldsymbol{X}_{k+2}) + \alpha f_{2,k+1}(\boldsymbol{W}_{k+2}) \\
    &= v_{k+2},
\end{split}
\end{equation}
% \begin{eqnarray}
%     v_{k+1} &=& f_1(\boldsymbol{W}_{k+1}, \boldsymbol{X}_{k+1}) + \alpha f_{2,k}(\boldsymbol{W}_{k+1}) \\ \nonumber
%            &\ge& f_1(\boldsymbol{W}_{k+1}, \boldsymbol{X}_{k+1}) + \alpha f_{2,k+1}(\boldsymbol{W}_{k+1}) \\ \nonumber
%            &\ge& f_1(\boldsymbol{W}_{k+2}, \boldsymbol{X}_{k+2}) + \alpha f_{2,k+1}(\boldsymbol{W}_{k+2}) \\ \nonumber
%            &=& v_{k+2},
% \end{eqnarray}
where the first inequality follows from \eqref{inequality_f2}, and the second inequality holds because $\boldsymbol{W}_{k+2}$ is an optimal solution and $\boldsymbol{W}_{k+1}$ is a feasible point for problem \eqref{eq:DRO-based RAB_last_relaxed_rankOne}, with $\boldsymbol{W}_k$ therein replaced by $\boldsymbol{W}_{k+1}$. Thus, the proof is complete.
\end{proof}

Given that the sequence $\{v_k\}_k$ is non-increasing and bounded below by the optimal value of \eqref{eq:DRO-based RAB_last_relaxed}, its convergence is guaranteed by the monotone convergence theorem. Consequently, the inequality in \eqref{inequality_f2_equivalent_2} becomes an equality at the convergence; otherwise, a strictly smaller optimal value would exist, contradicting the established lower bound. This, in turn, implies the existence of an index $k_0$ such that, for all $k \geq k_0$, the relation $\boldsymbol{W}_{k+1} = \beta_k \boldsymbol{W}_k$ holds. Since $\boldsymbol{W}_{k+1}$ is constrained to be a PSD matrix, it follows that $\beta_k > 0$. Furthermore, for sufficiently large values of $\alpha$, such that the penalty term dominates the objective, the optimizer enforces $\beta_k < 1$, effectively reducing the value of $f_{2,k}(\boldsymbol{W}_{k+1})$ at the expense of increasing $f_1(\boldsymbol{W}_{k+1}, \boldsymbol{X}_{k+1})$. Specifically, for two consecutive iterations, the following relation holds:
\begin{equation} \label{f2_convergence}
    f_{2,k+1}(\boldsymbol{W}_{k+2}) = \beta_{k+1} f_{2,k}(\boldsymbol{W}_{k+1}).\\
\end{equation}
Noting that the penalty term is non-negative for PSD matrices; \eqref{f2_convergence}, with $0 < \beta_{k+1} < 1$, implies that the penalty term converges to zero.

\section{Analysis of DRO-based RAB Problem under Alternative Uncertainty Sets} \label{Analysis of DRO-based RAB Problem under Alternative uncertainty Sets}

In this section, we delve into the analysis of problem \eqref{eq:DRO-based RAB} considering different uncertainty sets for the probability distributions of the random INC matrix $\boldsymbol{R}_{\rm{i+n}}$ and the random steering vector $\boldsymbol{a}$. First, we examine the problem using two alternative uncertainty sets for the probability distribution of the random steering vector. Then, we introduce an alternative uncertainty set for the probability distribution of the random INC matrix.

\subsection{Alternative Uncertainty Sets for Steering Vector}

As an alternative to the uncertainty set $\mathcal{D}_2$, we consider the uncertainty set $\mathcal{D}_2^\prime$, defined as follows:
{\small
\begin{equation} \label{eq:uncertainty_a_prime}
    \mathcal{D}_2^\prime = \left\{G_2 \in \mathcal{M}_2 \; \; \middle| \;
    \begin{array}{l}
        \mathbb{P}_{G_2}(\boldsymbol{a} \in \mathcal{Z}_2) = 1\\
        (\mathbb{E}_{G_2}[\boldsymbol{a}] - \bar{\boldsymbol{a}})^H \boldsymbol{Q}^{-1} (\mathbb{E}_{G_2}[\boldsymbol{a}] - \bar{\boldsymbol{a}}) \leq \gamma _1\\
        \| \mathbb{E}_{G_2}[(\boldsymbol{a} - \bar{\boldsymbol{a}})(\boldsymbol{a} - \bar{\boldsymbol{a}})^H] - \bar{\boldsymbol{\Sigma}} \|_F \leq \gamma _2\\
    \end{array}
    \right\},
\end{equation}
}\noindent 
where the second constraint ensures that the mean of the random vector $\boldsymbol{a}$ resides within an ellipsoid of size $\gamma_1$, centered at the empirical mean $\bar{\boldsymbol{a}}$, with the ellipsoid's shape defined by the positive definite matrix $\boldsymbol{Q} \succ \boldsymbol{0}$. The third constraint in \eqref{eq:uncertainty_a_prime} forces that $\mathbb{E}_{G_2}[(\boldsymbol{a} - \bar{\boldsymbol{a}})(\boldsymbol{a} - \bar{\boldsymbol{a}})^H]$, which is called the centered second-order moment matrix of $\boldsymbol{a}$ (see \cite{Delage2021}) and is different from the covariance matrix unless $\mathbb{E}_{G_2}[\boldsymbol{a}] = \bar{\boldsymbol{a}}$ (e.g., when $\gamma _1=0$), is located in a ball of radius $\gamma _2$, centered at the empirical covariance matrix $\bar{\boldsymbol{\Sigma}}$.

Using the uncertainty set $\mathcal{D}_2^\prime$, the minimization problem within the constraint of problem \eqref{eq:DRO-based RAB} can be reformulated as: 
\begin{equation} \label{eq:constraintDRO_D2_prime}
\begin{split}
    \underset{G_2 \in \mathcal{M}_2}{\text{minimize}} \quad &\int _{\mathcal{Z}_2} \boldsymbol{a}^H \boldsymbol{w} \boldsymbol{w}^H \boldsymbol{a}  \, dG_2(\boldsymbol{a})\\
    \text{subject to} \quad &\int _{\mathcal{Z}_2} dG_2(\boldsymbol{a}) = 1,\\
    &\left \| \int _{\mathcal{Z}_2} \boldsymbol{Q}^{-\frac{1}{2}} \boldsymbol{a} \, dG_2(\boldsymbol{a}) - \boldsymbol{Q}^{-\frac{1}{2}} \bar{\boldsymbol{a}} \right \| \leq \sqrt{\gamma _1},\\
    &\left\| \int _{\mathcal{Z}_2} (\boldsymbol{a} - \bar{\boldsymbol{a}})(\boldsymbol{a} - \bar{\boldsymbol{a}})^H \, dG_2(\boldsymbol{a}) - \bar{\boldsymbol{\Sigma}} \right \|_F \leq \gamma _2.\\
\end{split}
\end{equation}
To find the dual formulation of problem \eqref{eq:constraintDRO_D2_prime}, we need to determine the infimum of the Lagrangian, similar to the approach taken for problem \eqref{eq:constraintDRO}. The Lagrangian function for problem \eqref{eq:constraintDRO_D2_prime} is:
\begin{equation} \label{eq:LagrangianConstraintDRO_D2_prime}
\begin{split}
    \mathcal{L}_2&(G_2, x, y, z, \boldsymbol{z}, \boldsymbol{Z}) = \int _{\mathcal{Z}_2} \boldsymbol{a}^H \boldsymbol{w} \boldsymbol{w}^H \boldsymbol{a} \, dG_2(\boldsymbol{a})\\ 
    &+ x \left( \int _{\mathcal{Z}_2} dG_2(\boldsymbol{a}) - 1 \right)\\
    &- \Re \left(\boldsymbol{z}^H \left( \int _{\mathcal{Z}_2} \boldsymbol{Q}^{-\frac{1}{2}} \boldsymbol{a} \, dG_2(\boldsymbol{a}) - \boldsymbol{Q}^{-\frac{1}{2}} \bar{\boldsymbol{a}} \right)\right) - y \sqrt{\gamma _1}\\
    &- \text{tr} \left( \boldsymbol{Z} \left( \bar{\boldsymbol{\Sigma}} - \int _{\mathcal{Z}_2} (\boldsymbol{a} - \bar{\boldsymbol{a}})(\boldsymbol{a} - \bar{\boldsymbol{a}})^H \, dG_2(\boldsymbol{a}) \right) \right) - z \gamma _2.\\
\end{split}
\end{equation}
Here, $x \in \mathbb{R}$ is the Lagrange multiplier associated with the first constraint, $(\boldsymbol{z}, y) \in \mathbb{C}^N \times \mathbb{R}$, and $(\boldsymbol{Z}, z) \in \mathcal{H}^N \times \mathbb{R}$, where $\|\boldsymbol{z}\| \leq y$, and $\|\boldsymbol{Z}\|_F \leq z$, represent the Lagrange multipliers associated with the second and third constraints, respectively. The Lagrangian \eqref{eq:LagrangianConstraintDRO_D2_prime} can be rewritten as:
\begin{equation} \label{eq:L2_prime}
\begin{split}
    &\mathcal{L}_2 = -x + \Re (\bar{\boldsymbol{a}}^H \boldsymbol{Q}^{-\frac{1}{2}} \boldsymbol{z}) - y \sqrt{\gamma _1} - \text{tr}(\boldsymbol{Z}(\bar{\boldsymbol{\Sigma}} - \bar{\boldsymbol{a}} \bar{\boldsymbol{a}}^H)) - z \gamma _2\\
    &\!\!+\!\! \int _{\mathcal{Z}_2} \!\!(\boldsymbol{a}^H \!(\boldsymbol{w} \boldsymbol{w}^H \!+\! \boldsymbol{Z}) \boldsymbol{a} \!-\! 2 \Re (\boldsymbol{a}^H \!(\boldsymbol{Z} \bar{\boldsymbol{a}} \!+\! \frac{1}{2} \boldsymbol{Q}^{-\frac{1}{2}} \boldsymbol{z})) \!+\! x) \, dG_2(\boldsymbol{a}).\\
\end{split}
\end{equation}
Using similar methodology as in the previous dual problem derivations, 
% we impose a non-negativity constraint on the integrand. Therefore, 
the dual problem for \eqref{eq:constraintDRO_D2_prime} is formulated as:
% {\color{red}\begin{equation} \label{eq:dualProblemConstraint_D2_prime-1}
% \begin{split}
%     \underset{x, \boldsymbol{z}, \boldsymbol{Z}}{\text{maximize}} \;\; &-x + \Re (\bar{\boldsymbol{a}}^H \boldsymbol{Q}^{-\frac{1}{2}} \boldsymbol{z}) - y \sqrt{\gamma _1} - z \gamma _2\\
%     &\qquad \qquad \qquad \qquad \qquad \qquad \; - \text{tr}(\boldsymbol{Z}(\bar{\boldsymbol{\Sigma}} - \bar{\boldsymbol{a}} \bar{\boldsymbol{a}}^H)) \\
%     {\text{subject to}} \;\; &\boldsymbol{a}^H (\boldsymbol{w} \boldsymbol{w}^H + \boldsymbol{Z}) \boldsymbol{a} - 2 \Re (\boldsymbol{a}^H (\boldsymbol{Z} \bar{\boldsymbol{a}} + \frac{1}{2} \boldsymbol{Q}^{-\frac{1}{2}} \boldsymbol{z})) \\
%     &\qquad \qquad \qquad \qquad \qquad \quad \quad + x \geq 0,  \; \forall \boldsymbol{a} \in \mathcal{Z}_2,\\
%     & \|\boldsymbol{z}\| \leq y, \|\boldsymbol{Z}\|_F \leq z\\
%     &\boldsymbol{Z} \in \mathcal{H}^N, \boldsymbol{z} \in \mathbb{C}^N, x, y, z \in \mathbb{R}.\\
% \end{split} 
% \end{equation}
% By incorporating the second and third constraints on the Lagrange multipliers into the objective function, the problem in \eqref{eq:dualProblemConstraint_D2_prime-1} is reformulated as follows:
\begin{equation} \label{eq:dualProblemConstraint_D2_prime}
\begin{split}
    \underset{x, \boldsymbol{z}, \boldsymbol{Z}}{\text{maximize}} \;\; &-x + \Re (\bar{\boldsymbol{a}}^H \boldsymbol{Q}^{-\frac{1}{2}} \boldsymbol{z}) - \sqrt{\gamma _1}\|\boldsymbol{z}\| - \gamma _2 \|\boldsymbol{Z}\|_F\\
    &\qquad \qquad \qquad \qquad \; \, \, - \text{tr}(\boldsymbol{Z}(\bar{\boldsymbol{\Sigma}} - \bar{\boldsymbol{a}} \bar{\boldsymbol{a}}^H)) \\
    {\text{subject to}} \;\; &\boldsymbol{a}^H \!(\boldsymbol{w} \boldsymbol{w}^H \!\!+\! \boldsymbol{Z}) \boldsymbol{a} \!-\! 2 \Re (\boldsymbol{a}^H \!(\boldsymbol{Z} \bar{\boldsymbol{a}} \!+\! \frac{1}{2} \boldsymbol{Q}^{-\frac{1}{2}} \boldsymbol{z})) \!+\! x \!\geq\! 0,\\
    &\qquad \qquad \qquad \qquad \qquad \qquad \qquad \qquad \; \; \; \forall \boldsymbol{a} \in \mathcal{Z}_2,\\
    &\boldsymbol{Z} \in \mathcal{H}^N, \boldsymbol{z} \in \mathbb{C}^N, x \in \mathbb{R}.\\
\end{split} 
\end{equation}
In addition, strong duality between \eqref{eq:constraintDRO_D2_prime} and \eqref{eq:dualProblemConstraint_D2_prime} holds as the uncertainty set $\mathcal{D}_2^\prime$ is not an empty set.

We remark that by setting $\boldsymbol{x} = \frac{1}{2} \boldsymbol{Q}^{-\frac{1}{2}} \boldsymbol{z}$, problem \eqref{eq:dualProblemConstraint_D2_prime} can be reformulated as follows:
\begin{equation} \label{eq:dualProblemConstraint_D2_prime_2}
\begin{split}
    \underset{x, \boldsymbol{z}, \boldsymbol{Z}}{\text{maximize}} \;\; &-x + 2 \Re (\bar{\boldsymbol{a}}^H \boldsymbol{x}) - 2 \sqrt{\gamma _1} \| \boldsymbol{Q}^{\frac{1}{2}} \boldsymbol{x} \| - \gamma _2 \| \boldsymbol{Z} \|_F\\
    &\qquad \qquad \qquad \ \ \   - \text{tr}(\boldsymbol{Z}(\bar{\boldsymbol{\Sigma}} - \bar{\boldsymbol{a}} \bar{\boldsymbol{a}}^H))\\
    {\text{subject to}} \;\; &\boldsymbol{a}^H (\boldsymbol{w} \boldsymbol{w}^H + \boldsymbol{Z}) \boldsymbol{a} - 2 \Re (\boldsymbol{a}^H (\boldsymbol{Z} \bar{\boldsymbol{a}} + \boldsymbol{x})) + x \geq 0, \\
    &\qquad \qquad \qquad \qquad \qquad \qquad \qquad \quad \quad \; \forall \boldsymbol{a} \in \mathcal{Z}_2,\\
    &\boldsymbol{Z} \in \mathcal{H}^N, \boldsymbol{x} \in \mathbb{C}^N, x \in \mathbb{R}.\\
\end{split} 
\end{equation}

Suppose that $\mathcal{Z}_2$ is defined by \eqref{supportset_steeringvector}. Hence, the semi-infinite constraint in problem \eqref{eq:dualProblemConstraint_D2_prime_2} can be further specified to
\begin{equation} \label{eq:dualProblemConstraint_D2_prime_2_Constraint}
\begin{split}
    &\left[
    \begin{array}{cc}
        \boldsymbol{w} \boldsymbol{w}^H + \boldsymbol{Z} &-\boldsymbol{Z} \bar{\boldsymbol{a}} - \boldsymbol{x}\\
        - \bar{\boldsymbol{a}}^H \boldsymbol{Z} - \boldsymbol{x}^H &x\\
    \end{array}
    \right] \succeq\\
    &\tau _1
    \left[
    \begin{array}{cc}
        \boldsymbol{I} &\boldsymbol{0}\\
        \boldsymbol{0} &-(1+\Delta)N\\
    \end{array}
    \right] 
    + \tau _2
    \left[
    \begin{array}{cc}
        -\boldsymbol{I} &\boldsymbol{0}\\
        \boldsymbol{0} &(1-\Delta)N\\
    \end{array}
    \right],\\
    &\tau _1, \tau _2 \leq 0.
\end{split}
\end{equation}
Accordingly, similar to problem \eqref{eq:DRO-based RAB_last}, the original DRO-based RAB problem \eqref{eq:DRO-based RAB} can be recast into
\begin{equation} \label{eq:DRO-based RAB_last_D2_prime}
\begin{split}
    \text{minimize} \;\; &\rho _1 \|\boldsymbol{X}\|_F + \rho_2 \left\|\boldsymbol{w} \boldsymbol{w}^H + \boldsymbol{X}\right\|_F - \text{tr}(\boldsymbol{X} \boldsymbol{S}_0)\\
    \text{subject to} \;\; &-x + 2 \Re (\bar{\boldsymbol{a}}^H \boldsymbol{x}) - \text{tr}(\boldsymbol{Z}(\bar{\boldsymbol{\Sigma}} - \bar{\boldsymbol{a}} \bar{\boldsymbol{a}}^H))\\
    &\qquad \qquad \qquad \quad \geq 1 + 2 \sqrt{\gamma _1} \| \boldsymbol{Q}^{\frac{1}{2}} \boldsymbol{x} \| + \gamma _2 \| \boldsymbol{Z} \|_F,\\
    &\left[
    \begin{array}{cc}
        \boldsymbol{w} \boldsymbol{w}^H + \boldsymbol{Z} &-\boldsymbol{Z} \bar{\boldsymbol{a}} - \boldsymbol{x}\\
        - \bar{\boldsymbol{a}}^H \boldsymbol{Z} - \boldsymbol{x}^H &x\\
    \end{array}
    \right] \succeq\\
    &\hspace{-4pt}\tau _1
    \left[
    \begin{array}{cc}
        \boldsymbol{I} &\boldsymbol{0}\\
        \boldsymbol{0} &-(1+\Delta)N\\
    \end{array}
    \right] 
    + \tau _2
    \left[
    \begin{array}{cc}
        -\boldsymbol{I} &\boldsymbol{0}\\
        \boldsymbol{0} &(1-\Delta)N\\
    \end{array}
    \right],\\
    &\tau _1, \tau _2 \leq 0.\\
    &\boldsymbol{w},\boldsymbol{x} \in \mathbb{C}^N, \; \boldsymbol{X}, \boldsymbol{Z} \in \mathcal{H}^N, \; x \in \mathbb{R}.
\end{split}
\end{equation}

Another alternative of $\mathcal{D}_2$ is the following uncertainty set:
{\small
\begin{equation} \label{eq:uncertainty_a_prime_prime}
    \mathcal{D}_2^{\prime\prime} = \left\{G_2 \in \mathcal{M}_2 \; \; \middle| \,
    \begin{array}{l}
        \mathbb{P}_{G_2}(\boldsymbol{a} \in \mathcal{Z}_2) = 1\\
        (\mathbb{E}_{G_2}[\boldsymbol{a}] - \bar{\boldsymbol{a}})^H \boldsymbol{Q}^{-1} (\mathbb{E}_{G_2}[\boldsymbol{a}] - \bar{\boldsymbol{a}}) \leq \gamma _1\\
        \mathbb{E}_{G_2}[(\boldsymbol{a} - \bar{\boldsymbol{a}})(\boldsymbol{a} - \bar{\boldsymbol{a}})^H] \preceq  (1+ \gamma _2)\bar{\boldsymbol{\Sigma}}\\
    \end{array}
    \right\},
\end{equation}
}\noindent
which has been formulated in \cite[Example 8.10]{Delage2021}. The third constraint in \eqref{eq:uncertainty_a_prime_prime} allows us to control how far the realization might be from $\bar{\boldsymbol{a}}$ on average (see \cite{Delage2021}). Given \eqref{eq:uncertainty_a_prime_prime}, the minimization problem within the constraint of problem \eqref{eq:DRO-based RAB} can be reformulated as:
\begin{equation} \label{eq:constraintDRO_D2_prime_prime}
\begin{split}
    \underset{G_2 \in \mathcal{M}_2}{\text{minimize}} \quad &\int _{\mathcal{Z}_2} \boldsymbol{a}^H \boldsymbol{w} \boldsymbol{w}^H \boldsymbol{a}  \, dG_2(\boldsymbol{a})\\
    \text{subject to} \quad &\int _{\mathcal{Z}_2} dG_2(\boldsymbol{a}) = 1,\\
    &\left \| \int _{\mathcal{Z}_2} \boldsymbol{Q}^{-\frac{1}{2}} \boldsymbol{a} \, dG_2(\boldsymbol{a}) - \boldsymbol{Q}^{-\frac{1}{2}} \bar{\boldsymbol{a}} \right \| \leq \sqrt{\gamma _1},\\
    &\int _{\mathcal{Z}_2} (\boldsymbol{a} - \bar{\boldsymbol{a}})(\boldsymbol{a} - \bar{\boldsymbol{a}})^H \, dG_2(\boldsymbol{a}) \preceq (1 + \gamma _2) \bar{\boldsymbol{\Sigma}}.\\
\end{split}
\end{equation}
The Lagrangian function for problem \eqref{eq:constraintDRO_D2_prime_prime} can be formulated as follows:
\begin{equation} \label{eq:LagrangianConstraintDRO_D2_prime_prime}
\begin{split}
    \mathcal{L}_2&(G_2, x, y, \boldsymbol{z}, \boldsymbol{Z}) = \int _{\mathcal{Z}_2} \boldsymbol{a}^H \boldsymbol{w} \boldsymbol{w}^H \boldsymbol{a} \, dG_2(\boldsymbol{a})\\ 
    &+ x \left( \int _{\mathcal{Z}_2} dG_2(\boldsymbol{a}) - 1 \right)\\
    &- \Re \left(\boldsymbol{z}^H \left( \int _{\mathcal{Z}_2} \boldsymbol{Q}^{-\frac{1}{2}} \boldsymbol{a} \, dG_2(\boldsymbol{a}) - \boldsymbol{Q}^{-\frac{1}{2}} \bar{\boldsymbol{a}} \right)\right) - y \sqrt{\gamma _1}\\
    &+ \text{tr}\left(\boldsymbol{Z} \left(\int _{\mathcal{Z}_2} (\boldsymbol{a} - \bar{\boldsymbol{a}})(\boldsymbol{a} - \bar{\boldsymbol{a}})^H \, dG_2(\boldsymbol{a}) - (1 + \gamma _2) \bar{\boldsymbol{\Sigma}}\right)\right),\\
\end{split}
\end{equation}
where $x \in \mathbb{R}$, $(\boldsymbol{z}, y) \in \mathbb{C}^N \times \mathbb{R}$, $\|\boldsymbol{z}\| \leq y$, and $\boldsymbol{Z} \succeq \boldsymbol{0}$. Following similar approach as in the previous dual derivations, we formulate the dual problem for \eqref{eq:constraintDRO_D2_prime_prime} as:
\begin{equation} \label{eq:dualProblemConstraint_D2_prime_prime}
\begin{split}
    \underset{x, \boldsymbol{z}, \boldsymbol{Z}}{\text{maximize}} \;\; &-x + \Re (\bar{\boldsymbol{a}}^H \boldsymbol{Q}^{-\frac{1}{2}} \boldsymbol{z}) - \sqrt{\gamma _1} \| \boldsymbol{z} \|\\
    &\qquad \qquad \qquad \qquad \; \; - \text{tr}(\boldsymbol{Z}((1 + \gamma _2)\bar{\boldsymbol{\Sigma}} - \bar{\boldsymbol{a}} \bar{\boldsymbol{a}}^H))\\
    {\text{subject to}} \;\; &\boldsymbol{a}^H \!(\boldsymbol{w} \boldsymbol{w}^H \!\!+\! \boldsymbol{Z}) \boldsymbol{a} \!-\! 2 \Re (\boldsymbol{a}^H \!(\boldsymbol{Z} \bar{\boldsymbol{a}} \!+\! \frac{1}{2} \boldsymbol{Q}^{-\frac{1}{2}} \boldsymbol{z})) \!+\! x \!\geq\! 0,\\
    &\qquad \qquad \qquad \qquad \qquad \qquad \qquad \qquad \; \; \; \forall \boldsymbol{a} \in \mathcal{Z}_2,\\
    &\boldsymbol{Z} \succeq \boldsymbol{0}, \boldsymbol{z} \in \mathbb{C}^N, x \in \mathbb{R}.\\
\end{split} 
\end{equation}
Moreover, strong duality between \eqref{eq:constraintDRO_D2_prime_prime} and \eqref{eq:dualProblemConstraint_D2_prime_prime} is guaranteed by analogous reasoning.

For $\mathcal{Z}_2$ defined by \eqref{supportset_steeringvector}, introducing the variable transformation $\boldsymbol{x} = \frac{1}{2} \boldsymbol{Q}^{-\frac{1}{2}} \boldsymbol{z}$, problem \eqref{eq:dualProblemConstraint_D2_prime_prime} transforms into the following maximization problem:
\begin{equation} \label{eq:dualProblemConstraint_D2_prime_prime_2}
\begin{split}
    \underset{x, \boldsymbol{z}, \boldsymbol{Z}}{\text{maximize}} \;\; &-x + 2 \Re (\bar{\boldsymbol{a}}^H \boldsymbol{x}) - 2 \sqrt{\gamma _1} \| \boldsymbol{Q}^{\frac{1}{2}} \boldsymbol{x} \|\\
    &\qquad \qquad \qquad \; \; \; \: - \text{tr}(\boldsymbol{Z}((1 + \gamma _2)\bar{\boldsymbol{\Sigma}} - \bar{\boldsymbol{a}} \bar{\boldsymbol{a}}^H))\\
    {\text{subject to}} \;\; &\left[
    \begin{array}{cc}
        \boldsymbol{w} \boldsymbol{w}^H + \boldsymbol{Z} &-\boldsymbol{Z} \bar{\boldsymbol{a}} - \boldsymbol{x}\\
        - \bar{\boldsymbol{a}}^H \boldsymbol{Z} - \boldsymbol{x}^H &x\\
    \end{array}
    \right] \succeq\\
    &\hspace{-4pt}\tau _1
    \left[
    \begin{array}{cc}
        \boldsymbol{I} &\boldsymbol{0}\\
        \boldsymbol{0} &-(1+\Delta)N\\
    \end{array}
    \right] 
    + \tau _2
    \left[
    \begin{array}{cc}
        -\boldsymbol{I} &\boldsymbol{0}\\
        \boldsymbol{0} &(1-\Delta)N\\
    \end{array}
    \right],\\
    &\tau _1, \tau _2 \leq 0.\\
    &\boldsymbol{Z} \succeq \boldsymbol{0}, \boldsymbol{z} \in \mathbb{C}^N, x \in \mathbb{R}.\\
\end{split} 
\end{equation}

Thus, in this scenario, the DRO problem \eqref{eq:DRO-based RAB} can be reformulated as the following QMI problem:
\begin{equation} \label{eq:DRO-based RAB_last_D2_prime_prime}
\begin{split}
    \text{minimize} \;\; &\rho _1 \|\boldsymbol{X}\|_F + \rho _2 \|\boldsymbol{w} \boldsymbol{w}^H + \boldsymbol{X}\|_F - \text{tr}(\boldsymbol{X} \boldsymbol{S}_0)\\
    \text{subject to} \;\; &-x + 2 \Re (\bar{\boldsymbol{a}}^H \boldsymbol{x}) - \text{tr}(\boldsymbol{Z}(\bar{\boldsymbol{\Sigma}} - \bar{\boldsymbol{a}} \bar{\boldsymbol{a}}^H))\\
    &\qquad \qquad \qquad \geq 1 + 2 \sqrt{\gamma _1} \| \boldsymbol{Q}^{\frac{1}{2}} \boldsymbol{x} \| + \gamma _2 \text{tr}(\bar{\boldsymbol{\Sigma}} \boldsymbol{Z}),\\
    &\left[
    \begin{array}{cc}
        \boldsymbol{w} \boldsymbol{w}^H + \boldsymbol{Z} &-\boldsymbol{Z} \bar{\boldsymbol{a}} - \boldsymbol{x}\\
        - \bar{\boldsymbol{a}}^H \boldsymbol{Z} - \boldsymbol{x}^H &x\\
    \end{array}
    \right] \succeq\\
    &\hspace{-4pt}\tau _1
    \left[
    \begin{array}{cc}
        \boldsymbol{I} &\boldsymbol{0}\\
        \boldsymbol{0} &-(1+\Delta)N\\
    \end{array}
    \right] 
    + \tau _2
    \left[
    \begin{array}{cc}
        -\boldsymbol{I} &\boldsymbol{0}\\
        \boldsymbol{0} &(1-\Delta)N\\
    \end{array}
    \right],\\
    &\tau _1, \tau _2 \leq 0.\\
    &\boldsymbol{w},\boldsymbol{x} \in \mathbb{C}^N, \; \boldsymbol{X} \in \mathcal{H}^N, \; \boldsymbol{Z} \succeq \boldsymbol{0}, \; x \in \mathbb{R}.
\end{split}
\end{equation}

% Moreover, if $\mathcal{Z}_2$ is defined as in \eqref{constraint_quadratic_defined}, then the QMI constraint in dual problems \eqref{eq:DRO-based RAB_last_D2_prime} and \eqref{eq:DRO-based RAB_last_D2_prime_prime} can be replaced by the following QMI:
% \begin{equation} \label{constraint_S_lemma_2_prime}
%     \left[
%     \begin{array}{cc}
%         \boldsymbol{w} \boldsymbol{w}^H + \boldsymbol{Z} &-\boldsymbol{Z} \bar{\boldsymbol{a}} - \boldsymbol{x}\\
%         - \bar{\boldsymbol{a}}^H \boldsymbol{Z} - \boldsymbol{x}^H &x\\
%     \end{array}
%     \right] \succeq {\color{red}\sum _{i=1}^k} \tau _i
%     \left[
%     \begin{array}{cc}
%         \boldsymbol{Q}_i &\boldsymbol{q}_i\\
%         \boldsymbol{q}_i^H &q_i\\
%     \end{array}
%     \right],\; \tau _i \leq 0, 
% \end{equation}
% provided that there is an interior point in $\mathcal{Z}_2$.

\subsection{Alternative Uncertainty Set for INC Matrix}

Let us now direct our focus to the INC matrix. We consider the following uncertainty set as an alternative to $\mathcal{D}_1$:
\begin{equation}
    \mathcal{D}_1^\prime = \left\{G_1 \in \mathcal{M}_1 \; \; \middle| \; \;
    \begin{array}{l}
        \mathbb{P}_{G_1}(\boldsymbol{R}_{\rm{i+n}} \in \mathcal{Z}_1) = 1\\
        % \mathbb{E}_{G_1}[\boldsymbol{R}_{\rm{i+n}}] \succeq \boldsymbol{0}\\
        \mathbb{E}_{G_1}[\boldsymbol{R}_{\rm{i+n}}] \succeq (1 - \rho _1)(\boldsymbol{S}_0 + \epsilon \boldsymbol{I})\\
        \mathbb{E}_{G_1}[\boldsymbol{R}_{\rm{i+n}}] \preceq (1 + \rho _1)(\boldsymbol{S}_0 + \epsilon \boldsymbol{I})\\
    \end{array}
    \right\},
\end{equation}
where the second and third constraints ensure that the expected value of the INC matrix remains in close proximity to a diagonally loaded version of its empirical mean, with the allowable deviation regulated by $\rho _1$. Specifically, they effectively compare the expected value with a diagonally loaded version of the empirical mean matrix. By doing so, it enhances robustness against the influence of the desired signal present in the training data. This elevation of the noise floor through controlled diagonal loading mitigates the adverse impact of the desired signal, thereby ensuring more reliable and stable performance in adaptive processing scenarios.

Employing the uncertainty set $\mathcal{D}_1^\prime$, the maximization problem within the objective of problem \eqref{eq:DRO-based RAB} is reformulated as follows:
\begin{equation} \label{eq:objectiveDRO_D1_prime}
\begin{split}
    \underset{G_1 \in \mathcal{M}_1}{\text{maximize}} \quad &\int _{\mathcal{Z}_1} \boldsymbol{w}^H \boldsymbol{R} \boldsymbol{w} \, dG_1(\boldsymbol{R})\\
    \text{subject to} \quad &\int _{\mathcal{Z}_1} dG_1(\boldsymbol{R}) = 1,\\
    % &\int _{\mathcal{Z}_1} \boldsymbol{R} \, dG_1(\boldsymbol{R}) \succeq \boldsymbol{0},\\
    &\int _{\mathcal{Z}_1} \boldsymbol{R} \, dG_1(\boldsymbol{R}) \succeq (1 - \rho _1)(\boldsymbol{S}_0 + \epsilon \boldsymbol{I}),\\
    &\int _{\mathcal{Z}_1} \boldsymbol{R} \, dG_1(\boldsymbol{R}) \preceq (1 + \rho _1)(\boldsymbol{S}_0 + \epsilon \boldsymbol{I}).\\
\end{split}
\end{equation}
The Lagrangian function for \eqref{eq:objectiveDRO_D1_prime} is formulated as:
\begin{equation} \label{eq:LagrangianObjectiveDRO_D1_prime}
\begin{split}
    \mathcal{L}_1&(G_1, \lambda, \boldsymbol{X}, \boldsymbol{X'}) = \int _{\mathcal{Z}_1} \boldsymbol{w}^H \boldsymbol{R} \boldsymbol{w} \, dG_1(\boldsymbol{R})\\ 
    &+ \lambda \left(\int _{\mathcal{Z}_1} dG_1(\boldsymbol{R}) - 1 \right)\\
    % &+ \Re \left(\text{tr}\left(\boldsymbol{Y}\left(-\int _{\mathcal{Z}_1} \boldsymbol{R} \, dG_1(\boldsymbol{R})\right)\right)\right)\\
    &+ \text{tr}\left(\boldsymbol{X}\left(\int _{\mathcal{Z}_1} \boldsymbol{R} \, dG_1(\boldsymbol{R}) - (1 - \rho _1)(\boldsymbol{S}_0 + \epsilon \boldsymbol{I}) \right)\right)\\
    &+ \text{tr}\left(\boldsymbol{X'}\left(-\int _{\mathcal{Z}_1} \boldsymbol{R} \, dG_1(\boldsymbol{R}) + (1 + \rho _1)(\boldsymbol{S}_0 + \epsilon \boldsymbol{I}) \right)\right),\\
\end{split}
\end{equation}
where $\lambda \in \mathbb{R}$ and $\boldsymbol{X}, \boldsymbol{X'} \succeq \boldsymbol{0}$. Following similar approach as in dual derivations of \eqref{eq:objectiveDRO}, the dual problem of \eqref{eq:objectiveDRO_D1_prime} can be expressed as:
\begin{equation} \label{eq:dualProblemObjectiveLast_D1_prime}
\begin{split}
    \underset{\boldsymbol{X}, \boldsymbol{X'}}{\text{minimize}} \; &\delta_{\mathcal{Z}_1}(\boldsymbol{w} \boldsymbol{w}^H + \boldsymbol{X}  - \boldsymbol{X'})\\
    & \qquad \;\; + \text{tr}((\boldsymbol{X'}(1 + \rho _1) - \boldsymbol{X}(1 - \rho _1))(\boldsymbol{S}_0 + \epsilon \boldsymbol{I}))\\
    {\text{subject to}} \;\; &\boldsymbol{X}, \boldsymbol{X'} \succeq \boldsymbol{0}\ (\in \mathcal{H}^N_+).
\end{split} 
\end{equation}
Additionally, strong duality between \eqref{eq:objectiveDRO_D1_prime} and \eqref{eq:dualProblemObjectiveLast_D1_prime} holds due to similar reasoning. 

Accordingly, using $\mathcal{D}_1^\prime$ and incorporating the support function derived in \eqref{supportfunctionTraceConstraintSet}, the DRO problem \eqref{eq:DRO-based RAB} is reformulated as follows: 
\begin{equation} \label{eq:DRO-based RAB_last_D1_prime}
\begin{split}
    \text{minimize} \;\; &\rho _2 \|\boldsymbol{w} \boldsymbol{w}^H + \boldsymbol{X} - \boldsymbol{X'}\|_F\\
    & \qquad \; \; + \text{tr}(((1 + \rho _1) \boldsymbol{X'} - (1 - \rho _1) \boldsymbol{X})(\boldsymbol{S}_0 + \epsilon \boldsymbol{I}))\\
    \text{subject to} \;\; &x + \Re (\boldsymbol{a}_0^H \boldsymbol{x}) - \gamma_1 \|\boldsymbol{x}\|\\
    & \qquad \qquad \quad \; - \text{tr}(\boldsymbol{Z}( (1+\gamma_2)\boldsymbol{\Sigma} + \boldsymbol{a}_0 \boldsymbol{a}_0^H)) \geq 1,\\
    &\left[
    \begin{array}{cc}
        \boldsymbol{w} \boldsymbol{w}^H + \boldsymbol{Z} &-\frac{\boldsymbol{x}}{2}\\
        -\frac{\boldsymbol{x}^H}{2} &-x\\
    \end{array}
    \right] \succeq\\
    &\hspace{-4pt}\tau _1
    \left[
    \begin{array}{cc}
        \boldsymbol{I} &\boldsymbol{0}\\
        \boldsymbol{0} &-(1+\Delta)N\\
    \end{array}
    \right] 
    + \tau _2
    \left[
    \begin{array}{cc}
        -\boldsymbol{I} &\boldsymbol{0}\\
        \boldsymbol{0} &(1-\Delta)N\\
    \end{array}
    \right],\\
    &\tau _1, \tau _2 \leq 0,\\
    &\boldsymbol{w},\boldsymbol{x} \in \mathbb{C}^N, \;, \; \boldsymbol{X}, \boldsymbol{X'}, \boldsymbol{Z} \succeq \boldsymbol{0}, \; x, \tau _1, \tau _2 \in \mathbb{R}.
\end{split}
\end{equation}

Evidently, any of the problems \eqref{eq:DRO-based RAB_last_D2_prime}, \eqref{eq:DRO-based RAB_last_D2_prime_prime}, or \eqref{eq:DRO-based RAB_last_D1_prime} can be solved by finding a rank-one solution for its LMI relaxation problem (like \eqref{eq:DRO-based RAB_last_relaxed}) via Algorithm~\ref{alg:alg1}.

\section{Numerical Examples} \label{Numerical Examples}

For numerical simulations, we consider a uniform linear array (ULA) consisting of $N=10$ omnidirectional sensor elements, each spaced at intervals of half a wavelength. Each sensor is subject to additive noise, modeled as a zero-mean complex Gaussian random variable, which is temporally and spatially independent from the noise in other sensors, with a power level set at $0$~dB. In all simulation examples, we assume the presence of two interfering sources, each with the interference-to-noise ratio (INR) of $30$~dB, impinging upon the array from incident angles of $-5^{\circ}$ and $15^{\circ}$. The angular sector of interest spans from $0^{\circ}$ to $10^{\circ}$. The true arrival direction of the desired signal is $5^{\circ}$, whereas the assumed direction is $\theta _0 = 1^{\circ}$. In addition to the signal look direction mismatch, we take into consideration mismatch caused also by wavefront distortion in an inhomogeneous media \cite{Khabbazibasmenj2012}, as well as potential imperfections from limited-resolution phase shifters. Specifically, we model the desired signal steering vector as being distorted by accumulated phase shifts, where the phase increments are modeled as independent Gaussian variables with zero mean and a standard deviation of $0.02$. This value reflects a moderate level of distortion appropriate for the considered setup, and remains fixed throughout each simulation run. We assume that the desired signal is always present in the training data and the training sample size $T$ is preset to $100$. The result for each scenario in our simulations is obtained by conducting 200 independent simulation runs. Optimization problems are solved using CVX \cite{diamond2016cvxpy}, \cite{grant2013cvx}.

We evaluate the performance of the proposed beamformer by comparing it with the distributional RAB problem from \cite[Eqs. (27)-(31)]{Li2018} and the distributional RAB problem from \cite[Problem (26)]{Zhang2015} across various scenarios. Unless otherwise specified, the matrix $\boldsymbol{S}_0$ in solving the RAB problem in \eqref{eq:DRO-based RAB_last_relaxed} is set to the sample data covariance matrix $\widehat{\boldsymbol{R}}$. The mean vector of the random steering vector is $\boldsymbol{a}_0 = \frac{1}{L} \sum _{l=1}^L \boldsymbol{d}(\theta _l)$, and its covariance matrix is $\boldsymbol{\Sigma} = \frac{1}{L} \sum _{l=1}^L (\boldsymbol{d}(\theta _l) - \boldsymbol{a}_0)(\boldsymbol{d}(\theta _l) - \boldsymbol{a}_0)^H$, where $\boldsymbol{d}(\theta _l)$ signifies the steering vector corresponding to $\theta _l$ and follows the structure defined by the sensor array geometry. The set $\{\theta _l\}_{l=1}^L$ represents realizations of a random variable uniformly distributed over the angular sector of interest. For the parameters in \eqref{eq:DRO-based RAB_last_relaxed}, $\rho _1$ is specified as $0.001 \|\boldsymbol{S}_0\|_F$, while $\rho _2$ is set to $1.1 \text{tr}(\boldsymbol{S}_0)$. Additionally, $\gamma_1$ and $\gamma_2$ are defined as $0.01 \|\boldsymbol{a}_0\|$ and $0.1$, respectively. The parameter $\Delta$ is also set to $0.1$. All these parameters are optimized for the corresponding methods. In \eqref{eq:DRO-based RAB_last_relaxed_rankOne}, the parameter $\alpha$ is fixed at $10^3$. The termination threshold $\eta$ in Algorithm~\ref{alg:alg1} is set to $10^{-6}$. When addressing the beamforming problem as described in \cite{Li2018}, the parameters are configured as follows: the matrix $\widehat{\boldsymbol{R}}_y$ is set equal to the sample data covariance matrix $\widehat{\boldsymbol{R}}$, and the vector $\tilde{\boldsymbol{a}}$ is equated to $\boldsymbol{d}(\theta _0)$. The threshold parameter $p$ is assigned a value of $0.9$. Furthermore, $\sigma ^2$ is computed following the approach used in the first numerical example in \cite{Li2018}, specifically employing the Gaussian mixture model. In the beamforming problem (26) of \cite{Zhang2015}, the following configurations are applied: $\widehat{\boldsymbol{R}}_y = \widehat{\boldsymbol{R}}$, $\boldsymbol{d}_l = [\Re (\boldsymbol{d}(\theta _l))^T, \Im (\boldsymbol{d}(\theta _l))^T]^T$, where $\{\theta _l\}_{l=1}^L$ represents another set of realizations uniformly distributed over the angular sector of interest. Additionally, $\hat{\boldsymbol{a}} = \frac{1}{L} \sum _{l=1}^L \boldsymbol{d}_l$, $\hat{\boldsymbol{\Sigma}} = \frac{1}{L} \sum _{l=1}^L (\boldsymbol{d}_l - \hat{\boldsymbol{a}})(\boldsymbol{d}_l - \hat{\boldsymbol{a}})^T$, $\hat{\boldsymbol{\Sigma}} = \hat{\boldsymbol{\Sigma}} + 0.1 \boldsymbol{I}$ (to ensure positive definiteness), and $\gamma = 0.1 N$. The support set $\mathcal{S}$ 
consists of three ellipsoids (i.e., $n=3$ in problem (26) of \cite{Zhang2015}): $\mathcal{S} = \mathcal{E}_1 (\hat{\boldsymbol{a}}_1, \gamma _1 \hat{\boldsymbol{\Sigma}} _1) \cup \mathcal{E}_2 (\hat{\boldsymbol{a}}_2, \gamma _2 \hat{\boldsymbol{\Sigma}} _2) \cup \mathcal{E}_3 (\hat{\boldsymbol{a}}_3, \gamma _3 \hat{\boldsymbol{\Sigma}} _3)$, with $\gamma _i = 0.2N$ for $i=1,2,3$. The entries $(\hat{\boldsymbol{a}}_i, \boldsymbol{Q}_i)$ for $i=1,2,3$, are randomly generated from a standard Gaussian distribution, with covariance matrix $\hat{\boldsymbol{\Sigma}} _i = \boldsymbol{Q}_i \boldsymbol{Q}_i ^T$.

Furthermore, the performance of the DRO-based beamformer is evaluated for alternative uncertainty sets. For the parameters of the ellipsoidal constraint in $\mathcal{D}_2^{\prime}$ and $\mathcal{D}_2^{\prime\prime}$, $\gamma _1$ is specified as $0.01 \|\boldsymbol{a}_0\|$, and $\boldsymbol{Q}$ is defined as $\boldsymbol{Q} = \boldsymbol{U} \boldsymbol{U} ^H$, where $\boldsymbol{U}$ is randomly generated from a standard complex Gaussian distribution. The parameter $\gamma _2$ is selected as $0.01 \|\boldsymbol{\Sigma}\|_F$ for $\mathcal{D}_2^{\prime}$. For $\mathcal{D}_2^{\prime\prime}$, $\gamma _2$ is set to $0.1$. For $\mathcal{D}_1^{\prime}$, $\epsilon$ is set to $0.01 \lambda_{\max}(\widehat{\boldsymbol{R}})$. Additionally, $\rho _1$ for $\mathcal{D}_1^{\prime}$ is set to $0.1$.

Fig.~\ref{fig_1} illustrates the array output SINR as a function of the per-antenna signal-to-noise ratio (SNR) with $T = 100$ snapshots. For the proposed method, the formulation used corresponds to \eqref{eq:DRO-based RAB_last_relaxed}, which employs uncertainty sets $\mathcal{D}_1$ and $\mathcal{D}_2$, and the rank-one solution is obtained through Algorithm~\ref{alg:alg1}. The proposed DRO-based beamformer demonstrates superior performance compared to the beamformers presented in \cite{Li2018} and \cite{Zhang2015} within the operational SNR range. It is important to note that, in practical scenarios, beamforming is generally not required at high SNR levels. However, beyond a certain SNR threshold, the performance of the DRO-based beamformer degrades. This degradation arises due to the increased influence of the desired signal's presence in the training data at higher SNRs. In high SNR conditions, the performance degradation can be understood by interpreting the beamforming problem as a steering vector estimation problem, as discussed in sources such as \cite{Khabbazibasmenj2012}. At elevated SNRs, the sample data covariance matrix, used as an approximation of the INC matrix, becomes increasingly influenced by the presence of the desired signal. This influence leads to inaccuracies in steering vector estimation by introducing interference directions, or linear combinations thereof, into the estimated steering vector, thus undermining the beamformer’s interference suppression capabilities. This degradation effect is especially pronounced in the DRO-based beamformer due to its explicit distributional robustness constraint on the INC matrix. In this formulation, the uncertainty set for $\boldsymbol{R}_{\rm{i+n}}$ not only captures typical interference and noise variations but also tolerates additional distributional uncertainties to protect against model mismatches. This robustness constraint means that the DRO-based model is less rigid regarding the exact structure of interference and noise, broadening the allowable variations in the estimated INC matrix. However, at high SNR, the desired signal component increasingly contributes to the sample data covariance matrix, impacting the approximation of the INC matrix by shifting its structure closer to the desired signal’s spatial characteristics. Because the DRO-based beamformer is structured to account for a wider range of interference-plus-noise profiles, it interprets these variations in the sample data covariance matrix as valid instances of interference-plus-noise covariance structures. This flexibility, while beneficial under low-SNR conditions, introduces uncertainty at high SNR, where the desired signal’s influence distorts the estimation. Consequently, interference directions or their combinations can leak into the steering vector estimate, particularly in high-SNR conditions where the DRO-based uncertainty set accommodates broader deviations. Thus, the DRO-based beamformer, while being robust against general distributional uncertainty, inadvertently increases sensitivity to the influence of the desired signal within the sample data covariance matrix, resulting in a more pronounced degradation compared to other beamformers that do not incorporate distributional robustness for the INC matrix. To mitigate this issue, we also explore the performance of our method when the nominal matrix $\boldsymbol{S}_0$ is set to a reconstructed INC matrix based on the approach in \cite{Gu2012}, rather than the sample data covariance matrix. This helps reduce the influence of the desired signal component, thereby enhancing the beamformer’s SINR performance in the high SNR regime.

Fig.~\ref{fig_2} compares the performance of the beamformers with $T = 10$ snapshots, when the exact knowledge of the INC matrix is available. As expected, the performance degradation observed at high SNR levels is eliminated. This is because the exact INC matrix ensures that steering vector estimation is not affected by distortions introduced by the desired signal component. Notably, the proposed DRO-based beamformer demonstrates superior SINR performance compared to the beamformers of \cite{Li2018} and \cite{Zhang2015} across all SNR levels, even with a reduced number of snapshots. This enhanced performance underscores the superiority of the DRO-based formulation, particularly when exact covariance information is provided, allowing to optimally suppress interference without being constrained by sample data uncertainties. Furthermore, the ability of the DRO-based beamformer to capitalize on precise INC matrix knowledge, even in low-snapshot regimes, highlights its adaptability and effectiveness in challenging beamforming scenarios.

\begin{figure}[t]
\centering
\includegraphics[width=3.49in]{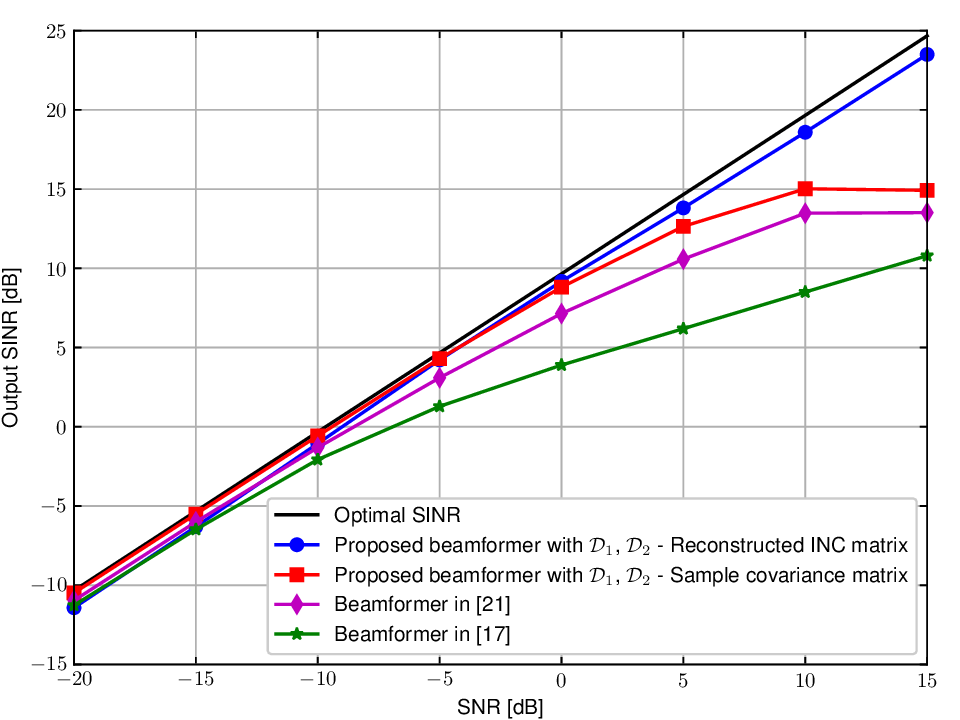}
\caption{Array output SINR versus per-antenna SNR ($T=100$ snapshots). The proposed DRO-based beamformer demonstrates superior performance compared to the beamformers of \cite{Li2018} and \cite{Zhang2015} within the operational SNR range. However, at high SNR, the performance of the DRO-based beamformer degrades due to the increased influence of the desired signal component in the sample data covariance matrix. This influence distorts the approximation of the INC matrix, impacting steering vector estimation and reducing interference suppression. Incorporating the reconstructed INC matrix from \cite{Gu2012} into the DRO-based formulation, instead of the basic sample data covariance matrix, improves performance, especially at higher SNR levels, by providing a more accurate nominal matrix for the uncertainty set.}
\label{fig_1}
\end{figure}

\begin{figure}[t]
\centering
\includegraphics[width=3.49in]{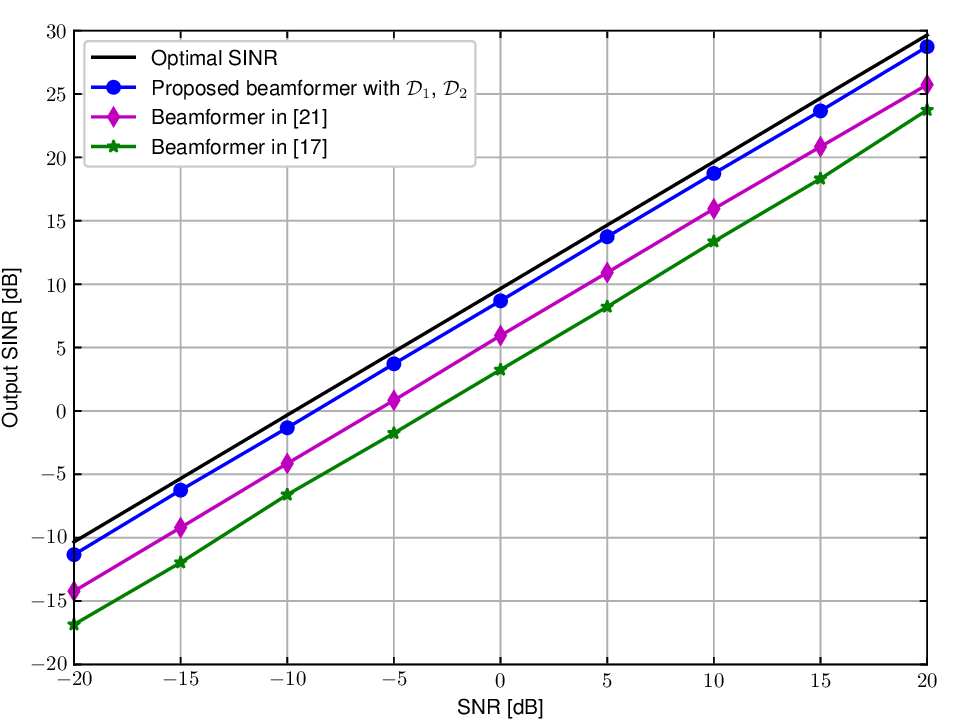}
\caption{Array output SINR versus per-antenna SNR for $T = 10$ snapshots, assuming exact knowledge of the INC matrix. There is no performance degradation at high SNR levels due to the absence of distortions caused by the desired signal component in the sample data covariance matrix. The proposed DRO-based beamformer demonstrates consistently superior SINR performance across the entire SNR range. This result highlights the benefits of exact covariance knowledge in leveraging the DRO-based formulation for interference suppression and SINR optimization.}
\label{fig_2}
\end{figure}

The performance of the proposed beamformer for different uncertainty sets, as discussed in Section~\ref{Analysis of DRO-based RAB Problem under Alternative uncertainty Sets}, is depicted in Fig.~\ref{fig_3}. For this analysis, the LMI relaxations of problems \eqref{eq:DRO-based RAB_last_D2_prime}, \eqref{eq:DRO-based RAB_last_D2_prime_prime}, and \eqref{eq:DRO-based RAB_last_D1_prime} were used, which correspond to the uncertainty sets $(\mathcal{D}_1, \mathcal{D}_2^\prime)$, $(\mathcal{D}_1, \mathcal{D}_2^{\prime\prime})$, and $(\mathcal{D}_1^\prime, \mathcal{D}_2)$, respectively. The rank-one solution was obtained using Algorithm~\ref{alg:alg1}. As observed, the performance of the proposed beamformer improves at high SNR levels when the uncertainty set $\mathcal{D}_1^\prime$ is utilized for the random INC matrix. This improvement is attributed to the fact that the uncertainty set $\mathcal{D}_1^\prime$ accounts for the presence of the desired signal in the training samples, thereby mitigating its effect. However, a slight performance degradation is noted at lower SNR values compared to the case where the uncertainty set $\mathcal{D}_1$ is employed. This degradation results from additional diagonal loading, leading to a less accurate estimation of the INC matrix. This issue can be managed by selecting an appropriate diagonal loading factor based on the SNR value. Furthermore, it is evident that performance at high SNR levels can be optimized by carefully selecting an appropriate uncertainty set for the random steering vector.

Fig.~\ref{fig_4} demonstrates the beamformer output SINR as a function of the number of snapshots, with the SNR fixed at $10$~dB. It is evident that the proposed beamformer consistently achieves better SINR compared to the other two beamformers across all snapshot counts.

\begin{figure}[t]
\centering
\includegraphics[width=3.49in]{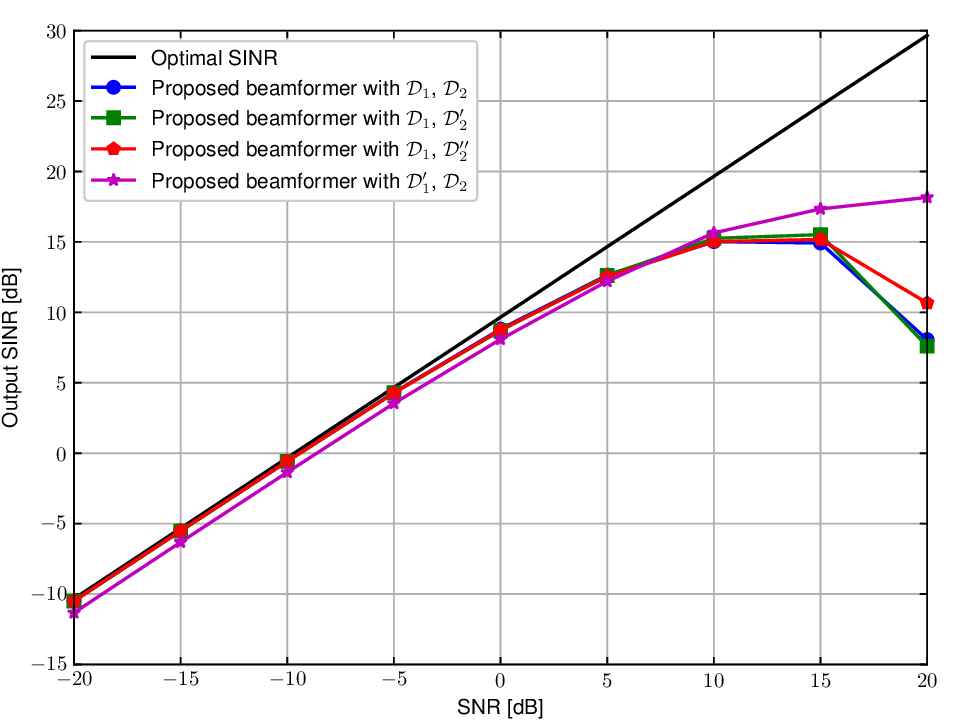}
\caption{Performance comparison of the proposed beamformer under different uncertainty sets. The performance improves at high SNR when the uncertainty set $\mathcal{D}_1^\prime$ is used for the random INC matrix. This improvement is attributed to $\mathcal{D}_1^\prime$'s ability to account for the desired signal in the training samples, thereby mitigating its influence. The set $\mathcal{D}_1^\prime$ incorporates diagonally loaded constraints that ensure the expected value of the INC matrix remains in close proximity to a diagonally loaded version of its empirical mean. This elevation of the noise floor through controlled diagonal loading mitigates the adverse impact of the desired signal, enhancing robustness.}
\label{fig_3}
\end{figure}

\begin{figure}[t]
\centering
\includegraphics[width=3.49in]{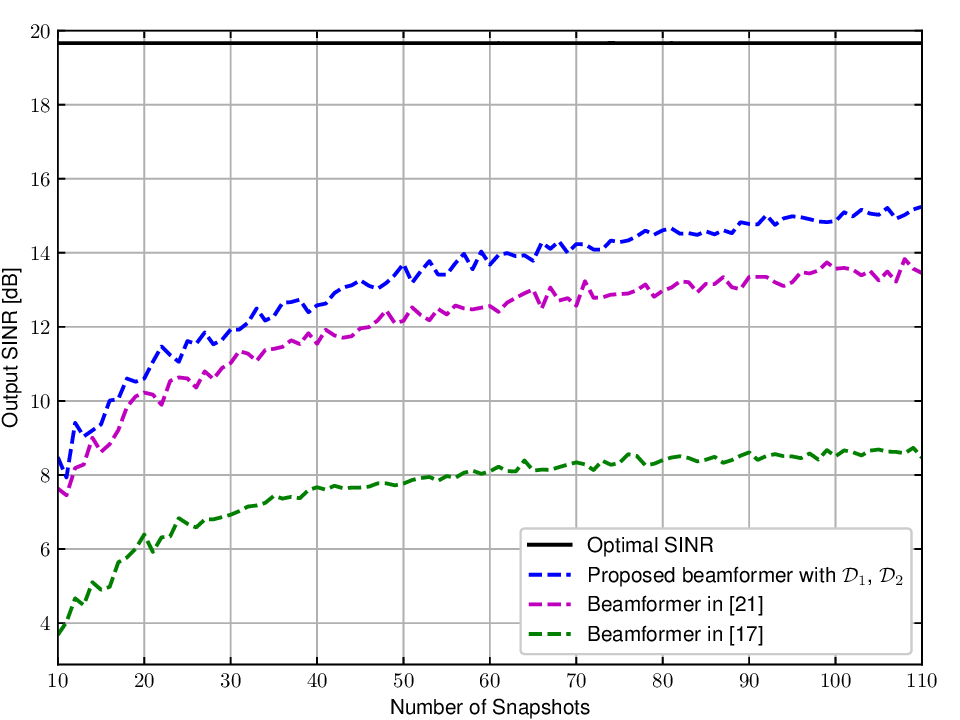}
\caption{Array output SINR versus number of snapshots with
SNR equal to $10$~dB. The proposed beamformer consistently outperforms the other beamformers across all snapshot counts, demonstrating its effectiveness in improving SINR with increasing sample size. As the number of snapshots increases, the performance gap widens, highlighting the robustness and adaptability of the proposed beamformer under realistic operating conditions.}
\label{fig_4}
\end{figure}

To further support the goal of developing a parameter-insensitive and data-driven RAB method, we examine the sensitivity of the proposed DRO-based beamformer \eqref{eq:DRO-based RAB_last_relaxed} to its parameters. Although several parameters appear in the formulation, the performance remains stable over a wide range of values of these parameters, suggesting that exact tuning is not critical. Most of these parameters, such as $\rho_1$ and $\gamma_1$, control the level of deviation allowed from nominal values and can be set based on the target application or general guidelines. To illustrate this, we evaluate the array output SINR at a fixed SNR of $-10$~dB for different values of $\rho_1$ and $\gamma_1$. The results are shown in Figs.~\ref{fig_5} and \ref{fig_6}, respectively. It can be seen that the proposed beamformer maintains consistent performance across a wide range of values, highlighting its robustness and ease of use in practical settings.

\begin{figure}[t]
\centering
\includegraphics[width=3.49in]{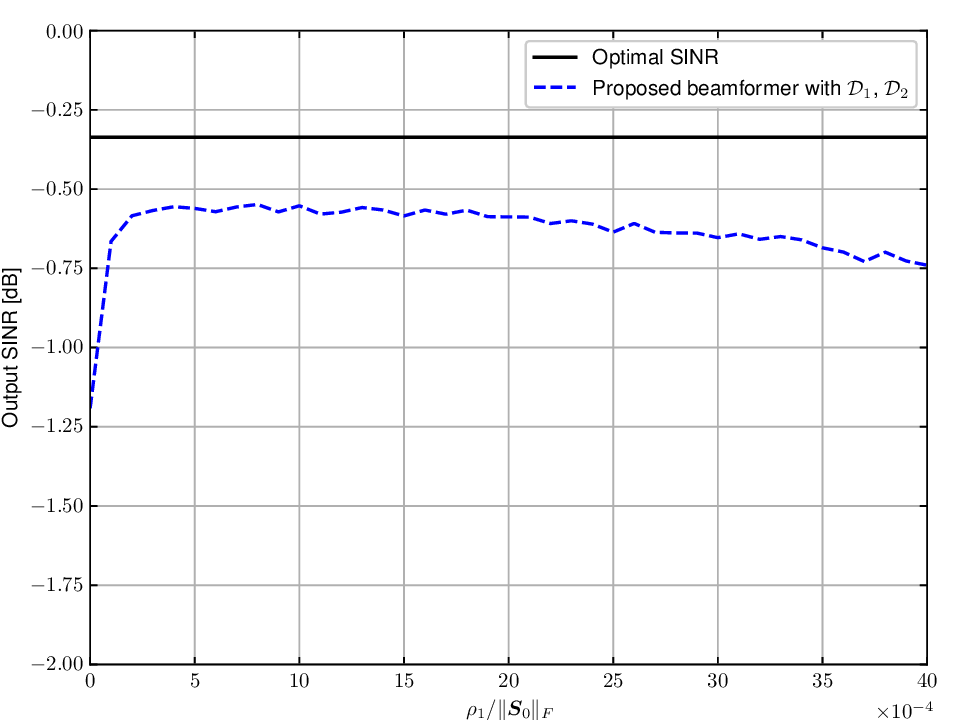}
\caption{Array output SINR versus $\rho_1 / \|\boldsymbol{S}_0\|_F$ at SNR = $-10$~dB with $T=100$ snapshots. The parameter $\rho_1$ defines the size of the uncertainty set for the INC matrix. The figure shows that the beamformer performance remains stable over a wide range of $\rho_1$ values, indicating that precise tuning is not necessary for reliable operation.}
\label{fig_5}
\end{figure}

\begin{figure}[t]
\centering
\includegraphics[width=3.49in]{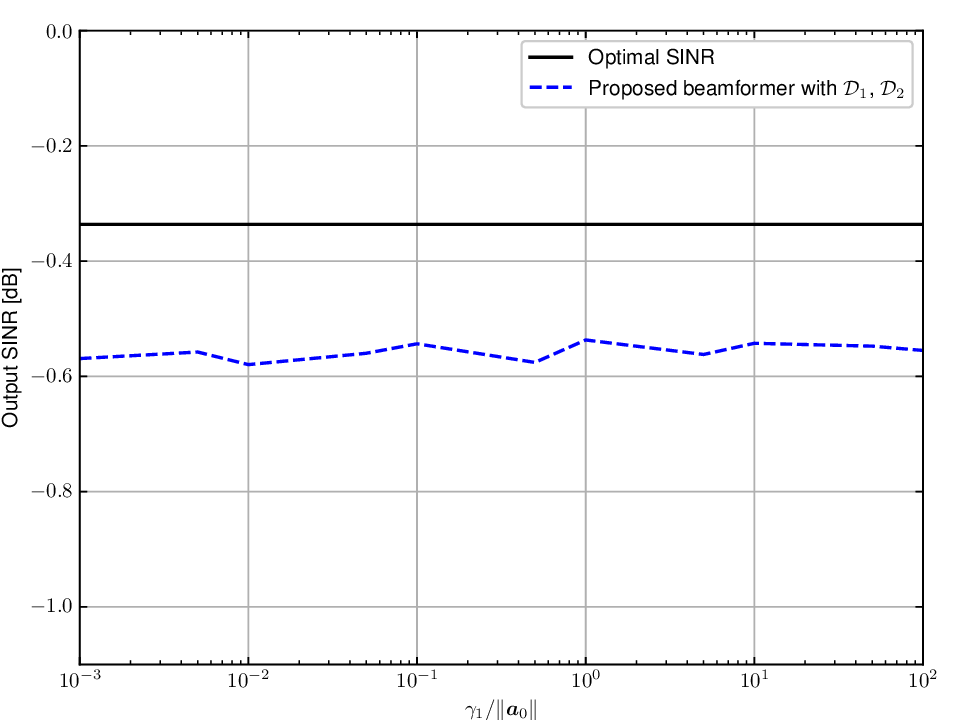}
\caption{Array output SINR versus $\gamma_1 / \|\boldsymbol{a}_0\|$ at SNR = $-10$~dB with $T=100$ snapshots. The parameter $\gamma_1$ controls the level of uncertainty in the steering vector model. The figure shows that the SINR is largely unaffected by changes in $\gamma_1$, demonstrating the robustness of the proposed method to variations in this parameter.}
\label{fig_6}
\end{figure}

A high-resolution scenario is explored by setting the number of sensors to $N = 32$. A larger number of sensors results in a more accurate estimation of the presumed direction and allows for a narrower desired angular sector due to the reduced beamwidth. In this setup, the presumed signal direction is set to $\theta_0 = 4^\circ$, with the desired sector defined as $[\theta_0 - 2^\circ, \theta_0 + 2^\circ]$. Interfering signals are placed at $0^\circ$ and $8^\circ$, which introduces interference in the vicinity of the desired sector. For our proposed DRO-based beamformer \eqref{eq:DRO-based RAB_last_relaxed}, all parameters remain unchanged from the previous scenario, except for $\rho_1$, which is now set to $\rho_1 = 0.01 \|\boldsymbol{S}_0\|_F$. We then compare the performance of our proposed method, using the reconstructed INC matrix from \cite{Gu2012} as $\boldsymbol{S}_0$, with the robust beamformer presented in \cite{Gu2012}, which is based on both reconstructed INC matrix and refined presumed steering vector.

As illustrated in Fig.~\ref{fig_7}, the proposed DRO-based beamformer \eqref{eq:DRO-based RAB_last_relaxed} performs comparably to the beamformer in \cite{Gu2012} when the reconstructed INC matrix is used as the nominal matrix and the presumed signal direction ($\theta_0 = 4^\circ$) closely approximates the true direction. It is also worth highlighting that the performance of the method in \cite{Gu2012} is highly dependent on the accuracy of the presumed direction. To demonstrate this, we repeat the simulation with $\theta_0 = 3.5^\circ$ and observe a considerable drop in the performance of the method in \cite{Gu2012}. In contrast, our proposed method does not rely on the presumed direction, and thus remains robust in the presence of such mismatches.

\begin{figure}[t]
\centering
\includegraphics[width=3.49in]{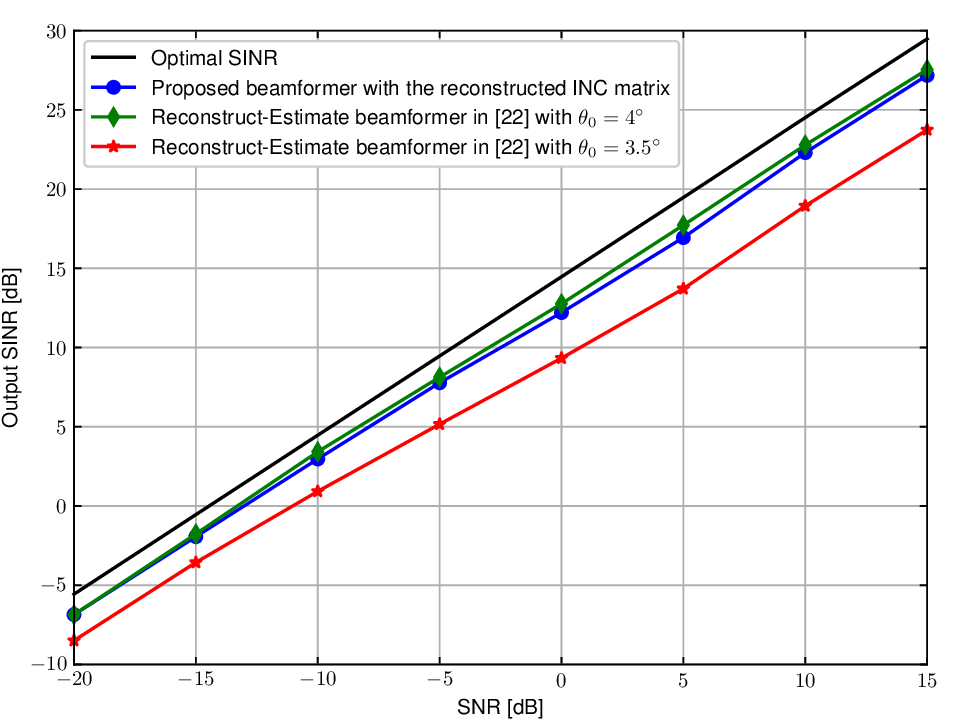}
\caption{Array output SINR versus per-antenna SNR for $N=32$ sensors with presumed direction $\theta_0 = 4^\circ$ and desired sector $[\theta_0 - 2^\circ, \theta_0 + 2^\circ]$. Two interferers are located at $0^\circ$ and $8^\circ$. The performance of the proposed DRO-based beamformer, using the reconstructed INC matrix from \cite{Gu2012} as the nominal matrix $\boldsymbol{S}_0$, is compared with the beamformer from \cite{Gu2012}. While both methods achieve comparable SINR when the presumed direction is accurate, the DRO-based beamformer maintains robustness under mismatched presumed directions, such as $\theta_0 = 3.5^\circ$, highlighting its independence from steering direction assumptions.}
\label{fig_7}
\end{figure}

\section{Conclusion} \label{Conclusion}

This paper has introduced a robust framework for adaptive beamforming utilizing distributionally robust optimization. We formulated the problem to minimize the worst-case interference-plus-noise power while ensuring the expected signal power meets a specified threshold, accounting for uncertainties in both the INC matrix and the signal steering vector. By leveraging the strong duality of linear conic programming, we effectively reformulated the DRO-based RAB problem into a QMI problem. The development of an iterative algorithm enabled us to achieve a rank-one solution for the LMI relaxation of the QMI problem. Furthermore, we establish that the stopping criterion of the proposed algorithm can be satisfied, ensuring convergence to a rank-one solution. Our evaluation across various uncertainty sets underscores the impact of uncertainty set definition on model performance. Well-defined uncertainty sets improve the effectiveness of the model. Numerical examples have validated the efficacy of our approach, demonstrating improvements in array output SINR compared to existing competitive beamforming techniques. Finally, we emphasize that the aim of this work is not to present a beamformer that outperforms all existing methods in all scenarios, but rather to introduce a novel framework based on distributional robustness. Our focus is on systematically modeling uncertainty and understanding how various uncertainty sets impact the performance and robustness of the beamformer.

\end{document}